\newcounter{hours}
\newcounter{minutes}
\newcommand{\Printtime}{\setcounter{hours}{\time/60}%
\setcounter{minutes}{\time-\value{hours}*60}%
\thehours:%
\ifthenelse{\value{minutes}<10}{0}{}\theminutes}
\newcommand{\ALG}{\ensuremath{\operatorname{\textsc{Alg}}}\xspace}
\newcommand{\bALG}{\ensuremath{\operatorname{\textsc{Adaptive Threshold}}}\xspace}
\newcommand{\ubALG}{\ensuremath{\operatorname{\textsc{Adaptive Threshold for Untrusted Predictions}}}\xspace}
\newcommand{\OPT}{\ensuremath{\operatorname{\textsc{Opt}}}\xspace}
\newcommand{\ADB}{\ensuremath{\operatorname{\textsc{AT}}}\xspace}
\newcommand{\ADBML}{\ensuremath{\operatorname{\textsc{AT}}}\xspace}
\newcommand{\MLB}{\ensuremath{\operatorname{\textsc{ATup}}}\xspace}
\newcommand{\FLOOR}[1]{\left\lfloor#1\right\rfloor}
\newcommand{\SET}[1]{\left\{#1\right\}}
\newcommand{\NAT}{\ensuremath{\mathbb{N}}}
\newcommand{\eps}{\ensuremath{\varepsilon}}
\newcommand{\TF}{T}
\newcommand{\ASSIGN}{\ensuremath{\gets}\xspace}
\newcommand{\ITEMS}[1]{\ensuremath{n_{\!#1}}\xspace}
\newcommand{\SIZE}[1]{\ensuremath{\operatorname{\textrm{size}}(#1)}\xspace}
\newcommand{\LEVEL}{\ensuremath{\operatorname{\textrm{level}}}\xspace}
\newcommand{\guess}{\ensuremath{\hat{a}}\xspace}
\newcommand{\kfinal}{\ensuremath{k_t}\xspace}
\algnewcommand\algorithmicswitch{\textbf{switch}}
\algnewcommand\algorithmiccase{\textbf{case}}
\NewDocumentCommand{\LeftComment}{s m}{%
  \Statex \IfBooleanF{#1}{\hspace*{\ALG@thistlm}}\(\triangleright\) #2}
\title{Online Unit Profit Knapsack with Untrusted Predictions} 
\author{Joan Boyar}{University of Southern Denmark, Odense, Denmark \and \url{https://imada.sdu.dk/~joan/}}{joan@imada.sdu.dk}{https://orcid.org/0000-0002-0725-8341}{}
\author{Lene M. Favrholdt}{University of Southern Denmark, Odense, Denmark \and \url{https://imada.sdu.dk/~lenem/}}{lenem@imada.sdu.dk}{https://orcid.org/0000-0003-3054-2997}{}
\author{Kim S. Larsen}{University of Southern Denmark, Odense, Denmark \and \url{https://imada.sdu.dk/~kslarsen/}}{kslarsen@imada.sdu.dk}{https://orcid.org/0000-0003-0560-3794}{}
\authorrunning{J. Boyar, L.\,M. Favrholdt, and K.\,S. Larsen}
\keywords{online algorithms, untrusted predictions, knapsack problem, competitive analysis} 
\begin{document}

\maketitle

\begin{abstract}
  A variant of the online knapsack problem is considered in the
  settings of trusted and untrusted predictions.  In Unit Profit
  Knapsack, the items have unit profit, and it is easy to find an
  optimal solution offline: Pack as many of the smallest items as
  possible into the knapsack.  For Online Unit Profit Knapsack, the
  competitive ratio is unbounded.
   In contrast,
   previous work on online algorithms with untrusted
   predictions generally
  studied problems where an online algorithm with a constant
  competitive ratio is known. 
  The prediction, possibly obtained from a machine learning source,
  that our algorithm uses is the average size of those smallest items
  that fit in the knapsack.  For the prediction error in this hard
  online problem, we use the ratio
  $r=\frac{a}{\guess}$ 
  where $a$ is the actual value for this average size and \guess is
  the prediction. The algorithm presented achieves a competitive ratio
  of $\frac{1}{2r}$ for $r\geq 1$ and $\frac{r}{2}$ for $r\leq
  1$. Using an adversary technique, we show that this is optimal in
  some sense, giving a trade-off in the competitive ratio attainable
  for different values of~$r$.  Note that the result for accurate
  advice, $r=1$, is only $\frac{1}{2}$, but we show that no algorithm
  knowing the value $a$ can achieve a competitive ratio better than
  $\frac{e-1}{e}\approx 0.6321$ and present an algorithm with a
  matching upper bound.
  We also show that
  this
  latter algorithm attains a competitive ratio of $r\frac{e-1}{e}$ for
  $r \leq 1$ and $\frac{e-r}{e}$ for
  $1 \leq r < e$, and no algorithm
  can be better for both $r<1$ and $1\leq r<e$.
\end{abstract}


\setcounter{page}{1}

\section{Introduction}
In this paper, we consider the Online Unit Profit Knapsack Problem:
The request sequence consists of $n$ item with sizes in $(0,1]$.  An online
algorithm receives them one at a time, with no knowledge of future
items, and makes an irrevocable decision for each, either accepting or
rejecting the item. It cannot accept any item if its size, plus the
sum of the sizes of the already accepted items, is greater than
$1$. The goal is to accept as many items as possible.  The obvious
greedy algorithm solves the offline Unit Profit Knapsack Problem,
since the set consisting of as many of the smallest items that fit in
the knapsack is an optimal solution.

Even for this special case of the Knapsack Problem, no competitive
online algorithms can exist.  Thus, we study the problem under the
assumption that (an approximation of) the average item size, $a$, in
an optimal solution is known to the algorithm.  We study the case,
where the exact value of $a$ is given to the algorithm as advice by an
oracle, as well as the case where $a$ is untrusted, e.g., estimated using
machine learning. For instance, the characteristics of the input may
be different depending on the time of day the input is produced, which
source produced the input, etc. This could be learned to some extent
and result in a prediction, which could be provided to the algorithm.

When considering machine-learned advice, the
concepts of consistency and robustness are often considered,
describing the balance between performing well on accurate advice and
not doing too poorly when the advice is completely wrong.  Our setting
is different from most work on online algorithms with machine-learned
advice, where there is generally a known online algorithm with a
constant competitive ratio for the problem without advice. For this
problem, if the advice is completely wrong, the algorithm cannot be
competitive, since the problem without advice does not allow for
competitive algorithms. Despite this hardness for the standard online
version of the problem, we obtain results with untrusted predictions
that are surprisingly consistent and robust.

\subsection{Previous Work}
\label{previouswork}
The Knapsack Problem is well studied and comes in many variants; see
Kellerer et al.~\cite{KPP04}.  Cygan et al.~\cite{CJS16} refer to the
online version we study, where all items give the same profit, as the
\emph{unit} case.  They mention that it is well-known that no online
algorithm for this version of the problem is competitive, i.e., has a
finite competitive ratio.  To verify this result, consider, for
instance, the family of input sequences $\sigma_j$ consisting of items
of sizes $\frac1i$, $i=1,2,3,\ldots,j$.

In the General Knapsack Problem, each item comes not only with a size,
but also with a profit, and the goal is to accept as much profit as
possible given that the total size must be at most~$1$.  The ratio of
the profit to the size is the \emph{importance} of an item.  (This is
sometimes called \emph{value}, but we want to avoid confusion with
other uses of that word.)

The Online Knapsack Problem was first studied by Marchetti-Spaccamela
and Vercellis~\cite{M-SV95}; they prove that the problem does not
allow for competitive online algorithms, even for Relaxed Knapsack
(fractions of items may be accepted), where all item sizes are $1$.
They concentrate on a stochastic version of the problem, where both
the profit and size coefficients are random variables.

The Online Unweighted (or Simple) Knapsack Problem with advice was
studied in~\cite{BKKR14}.
This is also called the proportional or
uniform case. In this version, the importance of each item is
equal to $1$.  They show that $1$ bit of advice is sufficient to
be $\frac12$-competitive, $\Omega(\log n)$ bits are necessary to be
better than $\frac12$-competitive, and $n-1$ advice bits are necessary
and sufficient to be optimal.
 (As mentioned later, they also considered the General
Knapsack Problem in the advice model.) The fundamental issues and  many of the early
results on oracle-based advice algorithms, primarily in the direction
of advice complexity, can be found in~\cite{BFKLM17}, though many
newer results for specific problems have been published since.

In~\cite{ZSHW20}, a knapsack problem is considered in a setting with
machine-learned advice, with results incomparable to ours. In their
setting, the General Knapsack Problem is considered, and results depend on upper
and lower bounds on the importance of the items.
The authors define
limited classes of algorithms, based on a parameter, leading to some
controlled degradation compared to an optimal competitive
ratio. Within the defined classes, focus is then on tuning compared
with historical data. Decisions to accept or reject an item are based
on a threshold function based on the item's importance.
Though the
definition of this function is ad hoc, in the sense that it is not
derived from some direct optimality criterion, it is well-motivated,
aiming to coincide with the behavior found in optimal algorithms for
the standard online algorithms setting.

Recently, in~\cite{IKQP21knapsack}, the General Knapsack Problem is
revisited, again with upper and lower bounds on the possible importance of items.
Machine-learned advice is given for each importance $v$, both an upper and
a lower bound for the sum of the sizes of the items with importance $v$.
The authors present an algorithm which has some similarities to ours.
In particular their budget function has a similar function to our
threshold function; both specify the maximum number of the low importance,
large items that need to be accepted to obtain the proven competitive
ratios. Their results can be extended to the case where the
predictions are off by a small amount, the lower bounds can be divided
by $1+\eps$, and the upper bounds can be multiplied by
$1+\eps$. This is in contrast to ours, where robustness results
are proven for arbitrarily large errors in the predictions, but only
$a$ is predicted. Since we have no bounds on the ratio of
the largest to smallest size, those values do not enter into our
results.  Their algorithm obtains what they prove to be the optimal
competitive ratio (for the given predictions), up to an additive
factor that goes to zero as the size of the largest item goes to zero;
this result has some of the flavor of our negative result.
The authors also consider two related problems.

The Bin Packing Problem is closely related to the Knapsack
Problem. This is
especially true for the dual variant where the number of bins is fixed
and the objective is to pack as many items as possible~\cite{BFLN01};
 the Unit Price Knapsack Problem is Dual Bin Packing
with one bin.
The standard Bin Packing Problem was considered with machine learning
in~\cite{ADJKR20}, considering a model of machine learning where, for
a given algorithm, \ALG, they consider a pair of values,
$(r_{\ALG},w_{\ALG})$, representing worst case ratios compared to the
optimal offline algorithm, \OPT. The value $r_{\ALG}$ gives the ratio
for the best (trusted) advice and $w_{\ALG}$ gives the ratio for the
worst possible (untrusted) advice. They use a parameter $\alpha$ in
their algorithm, and show that their algorithm achieves values
$(r, f(r))$ with $1.5<r \leq 1.75$ and $f(r)=\max\{33-18r,7/4\}$.

Bin Packing is also studied in~\cite{AKS21} in the standard setting
for online algorithms with machine learning, giving a trade-off
between consistency and robustness, with the performance degrading as
a function of the prediction error. They also have experimental
results. Since the problem is so difficult, they have restricted their
consideration to integer item sizes.

Much additional work has been done for other online problems, studying
variants with predictions (machine-learned advice, for instance),
initiated by the work of Lykouris and Vassilvitskii~\cite{LV18,LV21}
and Purohit et al.~\cite{PSK18} in 2018, with further work in the
directions of search-like
problems~\cite{A21,AKZ21,BCKP20,LHL19,LMHLSL21,MV17},
scheduling~\cite{AEMP22,AK21,BMRS20,IKQP21,KA18,LLMV20,LX21,M20},
rental problems~\cite{GP19,K19,WL20},
caching/paging~\cite{BCKPV22,IMMR20,JPS20,R20,W20}, and other
problems~\cite{AKS21,ACEPS20,AGKK20,BGGJ22,M21,RM21}, while some
papers attack multiple problems~\cite{ADJKR20,BMS20,LMRX21,WZ20}.  For
a survey, see~\cite{MV20}.

\subsection{Preliminaries}
We let $a$ denote the average size of items accepted by the offline,
optimal algorithm, \OPT, that accepts as many of the smallest items as
possible.  Moreover, we let $\guess$ denote the ``guessed'' or predicted
value of~$a$.  In the case of accurate advice (received from an
oracle), $\guess=a$.  If $\guess$ may not be accurate, possibly determined via machine learning, and
therefore not necessarily exactly~$a$, we define a ratio $r$ such that
$a = r \cdot \guess$.  This particular advice is considered as a value that
might be available or predictable, and the competitive ratios we
present are a function of~$r$.

We use the asymptotic competitive ratio throughout this paper.  Thus,
an algorithm \ALG, is $c$-competitive if there exists a constant $b$
such that for all request sequences $\sigma$,
$\ALG(\sigma) \geq c\OPT(\sigma)-b$, where $\ALG(\sigma)$ denotes
\ALG's profit on $\sigma$.  \ALG's competitive ratio is then
$\sup \{ c\mid \ALG \mbox{\rm ~is~}c\mbox{\rm -competitive}\}$.  Note
that this is a maximization problem and all competitive ratios are in
the interval~$[0,1]$.

We use the notation $\NAT=\SET{0, 1, 2, \ldots}$.  At any given time
during the processing of the input sequence, the {\em level} of the
knapsack denotes the total size of the items accepted.

\subsection{Our Results}
We consider both the case where the advice $\guess$ is known to be accurate, so $r=1$,
and the case where it might not be accurate. Different algorithms are
presented for these two cases, but they have a common form.

For our
algorithm \bALG (\ADB) where the advice is accurate and, thus, $\guess=a$, the
competitive ratio is $\frac{e-1}{e}$, and we prove a matching upper
bound that applies to any deterministic algorithm knowing~$a$. This
upper bound limits how well any algorithm using trusted predictions
can do; the competitive ratio cannot be better
than $\frac{e-1}{e} \approx 0.6321$ for $r=1$.

If \ADB is used for untrusted predictions, it obtains a competititve
ratio of $r\frac{e-1}{e}$ for $r \leq 1$, $\frac{e-r}{e}$ for $1 \leq
r \leq e$, and $0$ for $r \geq e$.
No algorithm can be better than this for both $r<1$ and $1 < r < e$.

For the results for
our algorithm, \ubALG (\MLB),
there
are two cases: for
$r\leq 1$ the competitive ratio is $\frac{r}{2}$, and for $r\geq 1$
the competitive ratio is $\frac{1}{2r}$.
Thus, for accurate
advice, the competitive ratio of \MLB is $\frac{1}{2}$, slightly less
good than for the other algorithm.
We show a negative result implying that an online algorithm
cannot both be $\frac{1}{2r}$-competitive for a range of large
$r$-values and better than $\frac12$-competitive for $r=1$.

Exact, oracle-based advice is not our focus point, though it is a
crucial step in our work towards an algorithm for untrusted
predictions. Thus, we do not emphasize the direction of advice
complexity, where the focus is on the number of bits of oracle advice
used to obtain given competitive ratios (or optimality),
but we include a brief discussion in Section~\ref{numberofbits}.
Instead, we
focus on advice that may be easy to obtain.  It seems believable
that the average size of requests in an optimal solution would be information easily
obtainable. The average size is probably a crucial component with
regards to the profit secured by a process and quite possibly crucial
with regards to supplying resources (knapsacks) over time. It is a
single number (or two numbers: number of items and total size) to
collect and store, as opposed to more detailed information about a
distribution. So little storage is required that one could keep
multiple copies if, for instance, the expected average changes during
the day.

Given the simple optimal algorithm for the offline version of unit
price knapsack, it seems obvious to consider another possibility for
advice, the maximum size, $s$, for items to accept. However, this is
insufficient, as there might be many items of that size, but the
optimal solution may contain very few of them. Thus, one also needs
further advice, including, for example, the fraction of the
knapsack filled by items of size~$s$. With these parameters given as advice,
there would be two possibilities for the error.
An extension of this idea is presented
in~\cite{BKKR14}, where the minimum importance is used, instead of the
maximum size, for the General Knapsack Problem, giving $k$-bit
approximations to the advice.

\section{The Adaptive Threshold Algorithm}

In Algorithm~\ref{algorithm-bounds}, we introduce an algorithm
template, which can be used to establish an oracle-based advice algorithm
as well as an algorithm for untrusted predictions.  The template omits the
definition of a threshold function, $\TF$, since it is different for the
two algorithms. In both algorithms, the threshold functions
 have the property that $\TF(i) > \TF(i+1)$ for
$i\geq 1$.  
We use the notation $\ITEMS{x}$ to denote the number of accepted items
strictly larger than $x$.

\begin{algorithm}
\begin{algorithmic}[1]
\State $\guess$ \ASSIGN predicted average size of \OPT's accepted items
\State \LEVEL \ASSIGN $0$
\For{each input item $x$}
   \State $i$ = $\max_{j\geq 0}\SET{\ITEMS{\TF(j+1)}=j}$ \label{alg:generici}
   \If{$\SIZE{x} \leq \TF(i+1)$ \textbf{and} $\LEVEL + \SIZE{x} \leq 1$}
      \State Accept $x$
      \State \LEVEL += $\SIZE{x}$
   \Else
      \State Reject $x$
   \EndIf
\EndFor
\end{algorithmic}
\caption{Algorithm \bALG.}
\label{algorithm-bounds}
\end{algorithm}

Intuitively, \bALG accepts items that fit as long as it has not
accepted too many items larger than the current item. The threshold
functions are used to determine how many larger items is too many;
no more than $i$ items of size larger than $\TF(i+1)$ are accepted. For smaller
item sizes, this number of larger items is larger, since we need to
accept more items if there are many small items.

Note that using $\max_{j\geq 0}\SET{\ITEMS{\TF(j+1)} \geq j}$ instead of
$\max_{j\geq 0}\SET{\ITEMS{\TF(j+1)}=j}$ in Line~\ref{alg:generici}
would result in the same algorithm.  Thus, $i$ is nondecreasing
through the processing of the input sequence,  and the value of the
threshold function, $\TF(i)$, is decreasing in $i$, so larger items cannot
be accepted after $i$ increases.

\section{Accurate Predictions}
\label{sec:advice}
In this section, we give an $\frac{e-1}{e}$-competitive algorithm
which receives $a$, the average size of the items in \OPT, as advice
and prove that it is optimal among algorithms that get only $a$ as
advice.

\subsection{Positive Result}

To define an advice-based algorithm, we define a threshold function;
see Algorithm~\ref{algorithm-advice}.
Throughout this section, we assume that $\guess=a$, but the algorithm
is also be used for untrusted predictions in Subsection~\ref{sec:semitrusted}.

\begin{algorithm}
\begin{algorithmic}[1]
\State Define $\displaystyle \TF(i)=\frac{\guess e}{\guess e(i-1)+1}$ for $i \geq 1$
\State Run \bALG, Algorithm~\ref{algorithm-bounds}
\end{algorithmic}
\caption{\bALG with advice, \ADB.}
\label{algorithm-advice}
\end{algorithm}

We first set out to prove that \ADB with $\guess = a$ has
competitive ratio at least $\frac{e-1}{e}\approx 0.6321$.
For that, we need two simple lemmas. The first involves an
obvious generalization of Harmonic numbers to non-integers.

Define $\sum_{i=x}^{y}f(k)$ for some function $f$ and real-valued $x$
and $y$ such that $y-x\in\NAT$ as $f(x)+f(x+1)+\cdots + f(y)$.  We
generalize the Harmonic numbers by defining
$H_k=\sum_{i=1+k-\FLOOR{k}}^{k}\frac{1}{k}$, for any real-valued
$k \geq 1$.

\begin{lemma}
\label{lemma-harmonic}
If $k \geq p \geq 1$ and $k-p\in\NAT$, then
$\ln k - \ln(p+1) \leq H_k - H_p \leq \ln k - \ln p$.
\end{lemma}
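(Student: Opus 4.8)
The plan is to reduce the difference $H_k - H_p$ to an ordinary tail sum and then sandwich it using the integral of $1/x$. First I would unpack the definition. Writing $f = k - \FLOOR{k}$ for the fractional part of $k$, the generalized harmonic number is
\[
H_k = \frac{1}{1+f} + \frac{1}{2+f} + \cdots + \frac{1}{\FLOOR{k}+f},
\]
a sum of $\FLOOR{k}$ terms whose denominators step down by $1$ from $k$ until reaching $1+f$. The hypothesis $k - p \in \NAT$ forces $p$ to have the same fractional part $f$ as $k$, so the denominators of $H_p$ form an initial segment of the same list.

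Consequently $H_k - H_p$ telescopes: each term of $H_p$ cancels a matching term of $H_k$, leaving
\[
H_k - H_p = \frac{1}{p+1} + \frac{1}{p+2} + \cdots + \frac{1}{k} = \sum_{i=p+1}^{k}\frac{1}{i},
\]
an integer-spaced sum of $k-p$ terms, each denominator at least $p+1 \geq 2$. Verifying this cancellation for non-integer $k$ and $p$ is the one place requiring care, since one must confirm that the denominators really do align; this is exactly where the shared fractional part is needed, and I expect it to be the only genuinely delicate step.

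From there I would invoke the standard monotonicity of $1/x$. Since $1/x$ is decreasing, each term satisfies
\[
\int_{i}^{i+1}\frac{dx}{x} \;\leq\; \frac{1}{i} \;\leq\; \int_{i-1}^{i}\frac{dx}{x}.
\]
Summing the right-hand inequality over $i = p+1, \ldots, k$ telescopes the integrals into $\int_{p}^{k} dx/x = \ln k - \ln p$, giving the upper bound. Summing the left-hand inequality telescopes into $\int_{p+1}^{k+1} dx/x = \ln(k+1) - \ln(p+1)$; since $\ln(k+1) \geq \ln k$, this is at least $\ln k - \ln(p+1)$, which yields the claimed lower bound (indeed a slightly stronger one). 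Beyond the bookkeeping of the telescoping in the first step, the integral estimates are routine, so I anticipate no further obstacle.
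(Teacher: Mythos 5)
Your proof is correct, and it takes a genuinely different route from the paper's. You make the difference $H_k - H_p$ explicit: using that $k-p\in\NAT$ forces $k$ and $p$ to share a fractional part, the terms of $H_p$ cancel against an initial segment of $H_k$, leaving the tail sum $\sum_{i=p+1}^{k}\frac1i$ (integer-spaced, possibly non-integer endpoints), which you then sandwich term by term between $\int_i^{i+1}\frac{dx}{x}$ and $\int_{i-1}^{i}\frac{dx}{x}$ and telescope. The paper instead introduces the auxiliary quantity $\Delta_k = H_k - \ln k$, proves it is strictly decreasing via the single estimate $\int_k^{k+1}\frac{dx}{x} > \frac{1}{k+1}$, and then obtains the upper bound from $\Delta_k \leq \Delta_p$ by induction; the lower bound comes from a second, separate induction showing $H_k - \ln k > H_p - \ln(p+1)$, driven by the companion estimate $\ln(p+1)-\ln p < \frac1p$. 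The underlying analytic content is identical — both arguments rest on comparing $\frac1i$ with integrals of $\frac1x$ over unit intervals — but the organization differs: your direct summation avoids induction entirely, is arguably cleaner, and yields the slightly stronger lower bound $\ln(k+1)-\ln(p+1)$; the paper's formulation isolates the monotonicity of $H_k - \ln k$ (the classical fact behind the Euler--Mascheroni constant), which is a reusable statement in its own right. Your instinct that the only delicate point is the alignment of denominators is right, and your handling of it (same fractional part, hence $\FLOOR{k}-\FLOOR{p}=k-p$ terms cancel) is exactly what justifies the manipulation $H_{k+1} = H_k + \frac{1}{k+1}$ that the paper uses without comment.
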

\begin{proof}
  Define $\Delta_k = H_k - \ln k$.  First, we argue that
  $\Delta_k > \Delta_{k+1}$.

Observe that
\[
\ln(k+1) - \ln k = \int_k^{k+1} \frac{1}{x}dx > \frac{1}{k+1},
\]
since $\frac{1}{k+1}$ is the smallest value we are integrating over.
So, $\frac{1}{k+1}-\ln(k+1) < -\ln(k)$.

Using this,
\[\Delta_{k+1}=H_{k+1}-\ln(k+1)=H_k+\frac{1}{k+1}-\ln(k+1)<H_k-\ln k=\Delta_k.\]
By the definition of $\Delta_k$,
\[
\begin{array}{rcl}
  H_k-H_p
  & = & \ln k + \Delta_k - (\ln p + \Delta_p) \\
  & \leq & \ln k - \ln p, \mbox{ since, by induction, $\Delta_k \leq \Delta_p$}\,.
\end{array}
\]

From the integral, it follows similarly that
$\ln(p+1)-\ln p < \frac{1}{p}$.  Thus,
\[\begin{array}{cl}
& \ln(p+1)-\ln p < \frac{1}{p} \\
\Updownarrow \\
& H_{p-1}-H_p=-\frac{1}{p} < \ln p - \ln(p+1) \\
\Updownarrow \\
& H_{p-1} - \ln p < H_p - \ln(p+1)\,.
\end{array}\]

Now, $H_k - \ln k > H_p - \ln(p+1)$ clearly holds for $p=k$, since
$\ln$ is increasing. So, by induction, using the above in the
induction step, it holds for smaller $p$ as well. Thus,
$\ln k - \ln(p+1) \leq H_k - H_p$ for $k\geq p$.
\end{proof}

The next lemma just establishes a simple analytical bound.
\begin{lemma}
\label{lemma-e-bound}
$\forall a > 0\colon e^{1-ae} \geq e - e^2a$.
\end{lemma}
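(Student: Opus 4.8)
The plan is to reduce the claim to the elementary inequality $e^x \geq 1 + x$, valid for every real $x$, which makes the whole statement a one-line consequence. First I would divide both sides of the desired inequality by $e > 0$, rewriting $e^{1-ae} \geq e - e^2 a$ in the equivalent form $e^{-ae} \geq 1 - ae$. This is precisely $e^x \geq 1 + x$ evaluated at $x = -ae$. Hence it holds for every real $a$ (not merely $a > 0$), and multiplying back through by $e$ recovers the claim. The only insight required is to factor out the leading $e$ so as to expose the standard exponential bound; there is no genuine obstacle, which matches the paper's own description of this as a simple analytical bound.

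If a fully self-contained argument is preferred, I would instead study the difference $f(a) = e^{1-ae} - e + e^2 a$ and show $f(a) \geq 0$ via convexity. One checks directly that $f(0) = 0$, that $f'(a) = e^2 - e^{2-ae}$ so that $f'(0) = 0$, and that $f''(a) = e^{3-ae} > 0$ for all $a$. Strict convexity together with the stationary point at $a = 0$ forces $a = 0$ to be the unique global minimum of $f$, so $f(a) > 0$ for every $a > 0$, with equality only in the limit $a \to 0$.

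Between the two routes I would present the first, since it is immediate and transparent; the convexity version is a useful sanity check but introduces derivative computations that the reduction to $e^x \geq 1 + x$ entirely avoids.
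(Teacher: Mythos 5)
Your proof is correct, and it takes a genuinely different route from the paper's. The paper does not use the reduction to $e^x \geq 1+x$ at all: it instead studies the quotient $\frac{e - e^{1-ae}}{a}$, differentiates it to show it is decreasing for $a > 0$, and then evaluates its limit as $a \to 0^{+}$ by L'H\^{o}pital's rule, obtaining the upper bound $e^2$, which rearranges to the claim. Your primary argument---dividing by $e$ to rewrite the inequality as $e^{-ae} \geq 1 - ae$ and invoking the standard bound $e^x \geq 1+x$ at $x = -ae$---is shorter, avoids calculus entirely (granting the standard inequality), and in fact establishes the statement for every real $a$, not just $a > 0$. Your fallback convexity argument on $f(a) = e^{1-ae} - e + e^2 a$ is also sound (the computations $f(0)=0$, $f'(0)=0$, $f''(a) = e^{3-ae} > 0$ all check out) and is still lighter than the paper's route, since it works with a difference rather than a quotient and so never needs a limit argument to handle the singularity at $a = 0$. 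The only thing the paper's heavier approach buys is a sharpness statement: since the quotient $\frac{e - e^{1-ae}}{a}$ is decreasing with supremum $e^2$ attained in the limit $a \to 0^{+}$, the constant $e^2$ is optimal in any bound of the form $e^{1-ae} \geq e - Ca$; but the lemma as used in the paper does not require that, so your first route is preferable.
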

\begin{proof}
  We prove that $\frac{e - e^{1-ae}}{a}$ is bounded from above by
  $e^2$.

  The derivative of the term is
  $\frac{ae^{2-ea}-(e-e^{1-ae})}{a^2} =
  \frac{e^{1-ae}(ea-e^{ea}+1)}{a^2}$.

  The terms $a^2$ and $e^{1-ae}$ are positive. Consider the remaining
  term, $ea-e^{ea}+1$. For $a=0$, this term is zero.  The derivative
  of $ea$ is $e$ and the derivative of $e^{ea}$ is $e^{ea+1}$.  For
  any $a > 0$, $e^{ea+1} > e$, so $ea-e^{ea}+1$ is negative.  Thus,
  for $a > 0$, the derivative of $\frac{e - e^{1-ae}}{a}$ is negative,
  and the term decreases with increasing~$a$.  Thus, the limit for $a$
  going towards zero is an upper bound.

  Using L'H\^{o}pital's rule,
  $ \lim_{a\rightarrow 0^{+}} \frac{e - e^{1-ae}}{a} =
  \lim_{a\rightarrow 0^{+}} \frac{e^{2-ae}}{1} = e^2.$
\end{proof}

With these two lemmas, we can now prove the theorem.

\begin{theorem}
  \label{thm:adb}
  For $\guess=a$, \ADB, as defined in Algorithm~\ref{algorithm-advice},
  is $\frac{e-1}{e}$-competitive.
\end{theorem}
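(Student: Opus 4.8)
The plan is to show that \ADB accepts at least $\frac{e-1}{e}\OPT(\sigma)$ items, up to an additive constant. Write $k=\OPT(\sigma)$ and let $S\le 1$ be the total size of \OPT's items, so that $a=S/k$ and $k\le 1/a$. Since only the asymptotic competitive ratio is required, I would first dispose of the case where $k$ is bounded by a constant (equivalently $ae\ge 1$) by folding it into the additive term, and henceforth assume $ae<1$, which makes $c:=\frac{1}{ae}$ large; note that the threshold then has the convenient reciprocal form $\TF(i)=\frac{1}{(i-1)+c}$, so $\TF(m+1)=\frac{1}{m+c}$.

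The first real step is to extract the structural invariant forced by Line~\ref{alg:generici}: at every point in time, at most $j$ accepted items exceed $\TF(j+1)$, for every $j\ge 0$. I would prove this by induction on the acceptances, using that the chosen index $i$ is the \emph{largest} $j$ with $\ITEMS{\TF(j+1)}=j$, so every strictly larger index already carries a strict deficit that a newly accepted (small) item cannot violate. Two consequences drive the rest. Writing $m=\kfinal$ for the final value of $i$, maximality gives \emph{exactly} $m$ accepted items larger than $\TF(m+1)$; and, ordering the accepted sizes decreasingly as $y_1\ge y_2\ge\cdots$, the invariant gives $y_\ell\le\TF(\ell)$, so the final level obeys $L\le\sum_{\ell=1}^{A}\TF(\ell)$, where $A$ is the number of accepted items.

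I would then split on whether any \OPT-item of size at most $\TF(m+1)$ is rejected, after the preliminary observation that such a ``small'' \OPT-item can only be rejected for lack of room: at its arrival the index is at most $m$, so the active threshold is at least $\TF(m+1)$ and the item passes the size test. In the first case (no small \OPT-item is rejected), \ADB accepts all $k^{-}$ of \OPT's items of size at most $\TF(m+1)$, together with the $m$ items exceeding $\TF(m+1)$; bounding the count $k^{+}=k-k^{-}$ of the larger \OPT-items by $k^{+}\,\TF(m+1)<S$ and substituting $S=ka$, a short calculation gives $A\ge k^{-}+m> k-S/(ae)=k-k/e=\frac{e-1}{e}k$, needing neither lemma.

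The main obstacle is the remaining case, where some small \OPT-item is rejected because the knapsack is full; this is precisely where both lemmas enter. Here the level just before that rejection, hence $L$, exceeds $1-\TF(m+1)\ge 1-ae$, and combining with $L\le\sum_{\ell=1}^A\TF(\ell)$ forces $\sum_{\ell=1}^A\frac{1}{(\ell-1)+c}\ge 1-ae$. I would rewrite this sum as the generalized-Harmonic difference $H_{A-1+c}-H_{c-1}$ and bound it above by $\ln\frac{A-1+c}{c-1}$ via Lemma~\ref{lemma-harmonic}; solving for $A$ yields $A>(c-1)\bigl(e^{1-ae}-1\bigr)$. Finally I would invoke Lemma~\ref{lemma-e-bound}, $e^{1-ae}\ge e-e^2a$, to convert this into $A\ge\frac{e-1}{e}\cdot\frac1a-O(1)\ge\frac{e-1}{e}k-O(1)$. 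The delicate points are the exact bookkeeping that keeps the constant at $\frac{e-1}{e}$ (the weakest instance of $L>1-\TF(m+1)$ being $m=0$) and verifying that the Harmonic-number difference meets the hypotheses $k\ge p\ge 1$ of Lemma~\ref{lemma-harmonic} once $ae<1$; everything else is routine.
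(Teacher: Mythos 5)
Your proposal is correct, and its skeleton coincides with the paper's: the same top-level dichotomy (some item that passes the size test is rejected for lack of room, versus all rejections are size-based), and in the full-knapsack case the identical chain of estimates --- level bounded by $\sum_{\ell=1}^{A}\TF(\ell)$, rewritten as a generalized-Harmonic difference, Lemma~\ref{lemma-harmonic} to pass to logarithms, then Lemma~\ref{lemma-e-bound} to land on $A \geq \frac{e-1}{e}\cdot\frac1a - O(1)$. Where you genuinely depart from the paper is the other case. The paper works multiplicatively with the ratio $\frac{i_t+s}{\ell+s}$, invokes $s > \bigl(\TF(i_t+1)/a-1\bigr)\ell$ and $\ell \leq 1/\TF(i_t+1)$, and needs two subcases according to whether $\TF(i_t+1)$ exceeds $a$ or not. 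You instead count additively: $A \geq k^{-}+m$ with $m = \frac{1}{\TF(m+1)}-\frac{1}{ae}$, bound $k^{+} < S/\TF(m+1)$, and then the single observation that $\TF(m+1) \leq \TF(1) = ae$ together with $1-S \geq 0$ gives
\[
A \;>\; k + \frac{1-S}{\TF(m+1)} - \frac{1}{ae} \;\geq\; k - \frac{S}{ae} \;=\; \frac{e-1}{e}\,k
\]
with no subcase split at all; this is a genuine simplification of the paper's Case~2, at no loss of strength. Two small points of bookkeeping: first, your reduction threshold should be $ae \leq \tfrac12$ rather than $ae<1$, since Lemma~\ref{lemma-harmonic} needs $p = \frac{1}{ae}-1 \geq 1$; this is harmless because folding all instances with $\OPT \leq 2e$ into the additive constant works just as well (the paper leaves this point implicit). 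Second, your structural invariant (at most $j$ accepted items exceed $\TF(j+1)$, for every $j$) does need the short induction you indicate --- for indices $j$ above the current $i$ one must use that $\ITEMS{\TF(j+1)} \leq j-1$ strictly before the acceptance --- but that induction goes through, and it is the same fact the paper states without proof.
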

\begin{proof}
  If \ADB never rejects an item, it performs optimally. So assume it
  rejects an item at some point in the request sequence~$\sigma$.
  Considering the conditional statement in the algorithm, if \ADB
  rejects an item, $x$, then either $\SIZE{x} > \TF(i+1)$ or
  $\LEVEL + \SIZE{x} > 1$.

\paragraph*{Case~1:}
  This is the case where, at some point, \ADB rejects an
  item, $x$, because $\LEVEL + \SIZE{x} > 1$. 

  The value of $\TF(k)$ from Algorithm~\ref{algorithm-bounds} is an
  upper bound on the size of the $k$th largest item accepted by the
  algorithm.  Thus, the $k$th largest accepted item has size at most
\[
  \TF(k) = \frac{ae}{ae(k-1)+1} = \frac{1}{k - 1 + \frac{1}{ae}}.
\]
Using the obvious definitions of sums over non-integer values,
as outlined above,
this gives
an upper bound on the total size of items accepted by \ADB of
\[
\LEVEL \leq
\sum_{k=1}^{\ADB(\sigma)}\frac{1}{k-1+\frac{1}{ae}} =
\sum_{k=\frac{1}{ae}}^{\ADB(\sigma)+\frac{1}{ae}-1}\frac{1}{k} =
H_{\ADB(\sigma)+\frac{1}{ae}-1} - H_{\frac{1}{ae} - 1}\,.
\]
Simple calculations (detailed in Lemma~\ref{lemma-harmonic}) give,
\[
H_{\ADB(\sigma)+\frac{1}{ae}-1} - H_{\frac{1}{ae} - 1} < 
\ln\left(\ADB(\sigma)+\frac{1}{ae}-1\right) -
\ln\left(\frac{1}{ae} - 1\right)
=  \ln\left(\frac{\ADB(\sigma)+\frac{1}{ae}-1}{\frac{1}{ae}-1}\right).
\]
By assumption, $\LEVEL + \SIZE{x} > 1$, and since
$\LEVEL
\leq
\ln\left(\frac{\ADB+\frac{1}{ae}-1}{\frac{1}{ae} - 1}\right)$,
we have
\begin{alignat*}{2}
& \ln\left(\frac{\ADB(\sigma)+\frac{1}{ae}-1}{\frac{1}{ae} - 1}\right)
  >
  1 - \SIZE{x} \\
\Updownarrow~~~ \\
& \frac{\ADB(\sigma)+\frac{1}{ae}-1}{\frac{1}{ae} - 1} >
  e^{1 - \SIZE{x}} \\
\Updownarrow~~~ \\
& \ADB(\sigma) > 
  \left( \frac{1}{ae} - 1 \right) e^{1 - \SIZE{x}}-\frac{1}{ae}+1\\
\Updownarrow~~~ \\
& \ADB(\sigma) > 
 \frac{e^{1 - \SIZE{x}}-1}{ae} + 1 - e^{1 - \SIZE{x}}\,.
\end{alignat*}

In the algorithm, $i$ is at least zero, so we cannot
accept items larger than $\TF(1)=ae$.
\begin{alignat*}{2}
  \ADB(\sigma) & > \frac{e^{1 - \SIZE{x}}-1}{ae} + 1 - e,
  &&  \mbox{ since $- e^{1 - \SIZE{x}} > -e$} \\
  & >  \frac{e^{1 - ae}-1}{ae} + 1 - e,
  &&  \mbox{ by the observation above} \\
  & \geq  \frac{e-e^2a-1}{ae} + 1 - e,
  &&  \mbox{ simple calculcations, detailed in Lemma~\ref{lemma-e-bound}} \\
  & =  \frac{e-1}{ae} - 2e + 1 \\
  & \geq  \frac{e-1}{e}\OPT(\sigma) - 2e + 1,
  &&  \mbox{ since $\OPT(\sigma) \leq \frac{1}{a}$}
\end{alignat*}
So, $\lim_{\OPT\to\infty}\frac{\ADB(\sigma)}{\OPT(\sigma)}\geq \frac{e-1}{e}$.

\paragraph*{Case~2:}
This is the case where \ADB never
rejects any item, $x$, when $\SIZE{x} \leq \TF(i+1)$.
Let $i_t$ denote the final value of $i$ as the algorithm terminates.
Suppose \OPT accepts $\ell$ items larger than $\TF(i_t+1)$ and $s$ items of
size at most $\TF(i_t+1)$.
Since \OPT accepts $\ell$ items larger than $\TF(i_t+1)$ and $\ell+s$
items in total, we have $a > \ell \cdot \TF(i_t+1) / (\ell+s)$, which is equivalent to
\begin{alignat}{2}
  \label{ineq:s}
  && s & > \left(\frac{\TF(i_t+1)}{a}-1\right)\ell
\end{alignat}

By the definition of $\TF$, we have that $\TF(i_t+1)=\frac{ae}{aei_t+1}$.
Solving for the $i_t$ on the right-hand side, we get
\begin{equation}
  \label{eq:it}
  i_t=\frac{1}{\TF(i_t+1)}-\frac{1}{ae} \,.
\end{equation}
Thus, \ADB has accepted at least
$i_t=\frac{1}{\TF(i_t+1)}-\frac{1}{ae}$ items of size greater than~$\TF(i_t+1)$.  Further, due to the
assumption in this second case, \ADB has accepted all of the $s$ items no larger
than~$\TF(i_t+1)$.  To see this, note that the $i$s of the algorithm can
only increase, so at no point has there been a size demand more
restrictive than $\TF(i_t+1)$.

We split in two subcases, depending on how $\TF(i_t+1)$ relates to \OPT's
average size, $a$.

\paragraph*{Subcase~2a: $\TF(i_t+1) > a$}
In this subcase, the lower bound on $s$ of Ineq.~(\ref{ineq:s}) is positive.
\begin{alignat*}{2}
  \frac{\ADB(\sigma)}{\OPT(\sigma)}
  & \geq 
  \frac{\left(\frac{1}{\TF(i_t+1)}-\frac{1}{ae}\right) + s}{\ell+s}\,, &&
  \text{ by Eq.~(\ref{eq:it})} \\
  & > 
  \frac{\left(\frac{1}{\TF(i_t+1)}-\frac{1}{ae}\right) +
    \left(\frac{\TF(i_t+1)}{a}-1\right)\ell}{\ell+\left(\frac{\TF(i_t+1)}{a}-1\right)\ell}\,, &&
  \text{ by Ineq.~(\ref{ineq:s})}\\
  & = 
  \frac{\left(\frac{1}{\TF(i_t+1)}-\frac{1}{ae}\right) + \left(\frac{\TF(i_t+1)}{a}-1\right)\ell}{\frac{\TF(i_t+1)}{a}\ell}\,.
  \end{alignat*}
The second inequality follows since the ratio is smaller than one and $s$ is
replaced by a smaller, positive term in the numerator as well as the
denominator.
  
We prove that this is bounded from below by $\frac{e-1}{e}$:
\begin{align*}
& \frac{\left(\frac{1}{\TF(i_t+1)}-\frac{1}{ae}\right) + \left(\frac{\TF(i_t+1)}{a}-1\right)\ell}{\frac{\TF(i_t+1)}{a}\ell}
  \geq \frac{e-1}{e} \\
\Updownarrow~~~ \\
& \frac{e}{\TF(i_t+1)} - \frac{1}{a} + e\left(\frac{\TF(i_t+1)}{a}-1\right)\ell
  \geq e\frac{\TF(i_t+1)}{a}\ell - \frac{\TF(i_t+1)}{a}\ell \\
\Updownarrow~~~ \\
& \frac{e}{\TF(i_t+1)} - \frac{1}{a}
  \geq  \left(e - \frac{\TF(i_t+1)}{a}\right)\ell \\
\Updownarrow~~~ \\
& \frac{ea-\TF(i_t+1)}{a\TF(i_t+1)}
  \geq \frac{ea-\TF(i_t+1)}{a}\ell \\
\Updownarrow~~~ \\
& \frac{1}{\TF(i_t+1)} \geq \ell
\end{align*}
For the last biimplication, we must argue that $ea-\TF(i_t+1) \geq 0$,
but this holds since $\TF(1)=ea$ and $\TF$ is decreasing.
Finally, the last statement, $\frac{1}{\TF(i_t+1)} \geq \ell$ holds
regardless of the relationship between $\TF(i_t+1)$ and $a$, since the
knapsack obviously cannot hold more than $\frac{1}{\TF(i_t+1)}$ items of
size greater than~$\TF(i_t+1)$.

\paragraph*{Subcase~2b: $\TF(i_t+1) \leq a$}
\begin{alignat*}{2}
  \frac{\ADB(\sigma)}{\OPT(\sigma)}
  & \geq
  \frac{\left(\frac{1}{\TF(i_t+1)}-\frac{1}{ae}\right) + s}{\ell+s}\,,
  &&\text{ by Eq.~(\ref{eq:it}) }\\
  & \geq
  \frac{\left(\frac{1}{\TF(i_t+1)}-\frac{1}{ae}\right)}{\ell}\,, && \text{
    since } s \geq 0 \text{ and } \frac{\ADB(\sigma)}{\OPT(\sigma)}
  \leq 1 \\
  & >
  \frac{\left(\frac{1}{\TF(i_t+1)}-\frac{1}{ae}\right)}{\frac{1}{\TF(i_t+1)}}\,,
  &&\text{ since, as above, $\ell \leq \frac{1}{\TF(i_t+1)}$ } \\
  & =
  1 - \frac{\TF(i_t+1)}{ae} \\
  & \geq
  1 - \frac{a}{ae}\,,&& \text{ by the subcase we are in} \\
  &  =
  \frac{e-1}{e}\,.
\end{alignat*}
This concludes the second case, and, thus, the proof.
\end{proof}

\subsection{Negative Result}
Now, we show that \ADB is optimal among
online algorithms knowing $a$ and nothing else.
\begin{theorem}
\label{thm:advneg}
  Any algorithm getting only $a$ as advice has a competitive ratio of
  at most~$\frac{e-1}{e}$.
\end{theorem}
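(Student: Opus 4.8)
The plan is to give an adversary argument: fixing the advice value $a$, I construct a family of request sequences, all of which have optimal average size exactly $a$, so that the common advice $\guess=a$ gives the algorithm no way to tell them apart. Since the algorithm is deterministic and the sequences will be presented in a fixed, decreasing order of item size and will agree on long prefixes, the algorithm's accept/reject decisions on a shared prefix are identical across the whole family; the adversary, knowing the algorithm, may therefore choose the continuation after watching those decisions. I would take $a\to 0$ (equivalently $\OPT\to\infty$) throughout, so that the additive constant $b$ in the definition of the competitive ratio is negligible and only the asymptotic ratio matters.

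The heart of the construction is a dilemma that persists even though $a$ is known. Present items in strictly decreasing size order. When the algorithm is shown an item of size $s>a$, the single number $a$ is consistent both with a continuation in which many still-smaller items arrive, so that \OPT never uses the size-$s$ items and every such item the algorithm accepts merely wastes capacity, and with a continuation in which few smaller items arrive, so that \OPT does use the size-$s$ items and the algorithm ought to have accepted them. Because $a$ is a single aggregate quantity realized by many different size distributions, observing the prefix together with $a$ is insufficient to determine how many of the currently-large items should be accepted. I would choose the sizes to follow the reciprocal-harmonic profile $\TF(k)=\frac{ae}{ae(k-1)+1}$ already used in the positive result, so that the total size of the $k$ largest items is a difference of generalized Harmonic numbers; by Lemma~\ref{lemma-harmonic} this equals $\ln(1+kae)$ up to lower-order terms. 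The capacity bound $\LEVEL\le 1$ then caps the number of large items the algorithm can usefully hold at roughly $\frac{e-1}{ae}$, the point where the logarithm reaches $1$, while \OPT attains $\frac{1}{a}$; this is precisely where the base-$e$ constant enters, mirroring the step $\ln x = 1 \Rightarrow x=e$ in the proof of Theorem~\ref{thm:adb}.

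To close the two horns of the dilemma against an arbitrary algorithm, the adversary, after watching the algorithm commit on the large items, appends a block of small items whose size and cardinality are tuned to drive the optimal average to exactly $a$. If the algorithm has greedily filled capacity with large items, this block is the plentiful supply of near-optimal small items that \OPT uses and that the algorithm can no longer fit; if the algorithm has instead held back, the adversary ends the sequence with only a sparse block, so that \OPT is large but the algorithm has accepted too few items. In either case I would bound $\ALG(\sigma)/\OPT(\sigma)$ from above by the same expression and optimize the adversary's stopping level to obtain the worst-case value $\frac{e-1}{e}$.

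I expect the main obstacle to be exactly the tension between the two requirements on the family: pinning the optimal average at a fixed small $a$ forces a large population of small items to be present, which a patient algorithm could try to exploit by simply waiting, whereas trapping the algorithm requires that waiting sometimes be punished. Reconciling these, namely choosing the decreasing-size profile and the tuned padding block so that no single threshold derived from $a$ lets the algorithm win on every branch, and verifying that the adversary can always restore the average to $a$ while the algorithm is already committed, is the delicate part; the remaining work is the routine conversion of the harmonic sums to logarithms via Lemma~\ref{lemma-harmonic} and the optimization that produces the constant.
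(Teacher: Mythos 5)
Your proposal is essentially the paper's own proof: its adversary (Algorithm~\ref{algorithm-adversarial}) presents items of decreasing sizes $\frac{1}{k}-\eps$ starting at $k=\left\lfloor\frac{1}{ae}\right\rfloor$ (your harmonic profile), punishes an algorithm that rejects a whole round of such items with a sparse tail so that \OPT uses exactly those items plus tiny fill items (its Case~1), and punishes persistent acceptance by flooding with $\frac{1}{a}$ items of size $a$ once the level exceeds $1-\frac{1}{k}-k\eps$ (its Case~2), in both branches keeping \OPT's average at exactly $a$; the harmonic-sum-to-logarithm step via Lemma~\ref{lemma-harmonic} gives $\kfinal < e\left\lfloor\frac{1}{ae}\right\rfloor$ and hence the $\frac{e-1}{e}$ bound, just as you outline. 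The one ``delicate part'' you flag --- restoring the average to exactly $a$ after the algorithm has committed --- is resolved in the paper by giving items of size $\frac{1}{k}-\eps$ with explicit slack $\eps=\frac{a^2}{10}$ rather than sizes exactly on the threshold curve, which reserves room $k\eps$ for the tuned fill block of $\frac{1}{a}-k$ items of size $\frac{ka\eps}{1-ka}$.
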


\begin{proof}
  Let \ALG denote the online algorithm with advice, and let $\sigma$
  be the adversarial sequence defined by
  Algorithm~\ref{algorithm-adversarial}, which explains how the
  adversary defines its sequence based on \ALG's actions.

\begin{algorithm}
  \begin{algorithmic}[1]
    \LeftComment{Assume $a < \frac{1}{2e}$ and $\frac1a \in \mathbb{N}$}
    \State $\eps$ \ASSIGN $\frac{a^2}{10}$ 
    \State $k$ \ASSIGN $\FLOOR{\frac{1}{ae}}$ \linespread{1.2}\selectfont
    \While{\ALG's $\LEVEL \leq 1-\frac{1}{k}-k\eps$} \linespread{1}\selectfont
       \For{$k$ times}
          \State Give an item of size $\frac{1}{k}-\eps$
          \If{\ALG accepts}
             \State $k$++
             \State \textbf{continue} (* the while-loop *)
          \EndIf
       \EndFor
       \LeftComment{\ALG did not accept any of the $k$ items of this round.}
       \State Give $\frac{1}{a}-k$ items of size $\frac{ka\eps}{1-ka}$\label{caseone}
       \State \textbf{terminate} \Comment{Case 1}
    \EndWhile
    \State Give $\frac{1}{a}$ items of size $a$ \Comment{Case 2}
\end{algorithmic}
\caption{Adversarial sequence establishing optimality with advice.}
\label{algorithm-adversarial}
\end{algorithm}

Let $\kfinal$ be the value of $k$ at the beginning of the last iteration of
the while-loop.
We perform a case analysis based on how the generation of the
adversarial sequence terminates.

\paragraph*{Case~1:}
\OPT accepts the $\kfinal$ items of size~$\frac{1}{\kfinal}-\eps$ in the last
iteration of the while-loop and the $\frac{1}{a}-\kfinal$ items of
size~$\frac{\kfinal a\eps}{1-\kfinal a}$ for a total of $\frac{1}{a}$ items of total
size
\[\kfinal\left(\frac{1}{\kfinal}-\eps\right)+\left(\frac{1}{a}-\kfinal\right)\frac{\kfinal a\eps}{1-\kfinal a} =
  1-\kfinal\eps+(1-\kfinal a)\frac{\kfinal\eps}{1-\kfinal a} = 1\,.\]
Note that the average size of the items accepted by \OPT is $a$,
consistent with the advice.

\ALG accepts one item in each iteration of the while-loop, except the
last iteration, and at most $\frac{1}{a}-\kfinal$ items after that, so no
more than
\[\kfinal-\FLOOR{\frac{1}{ae}} + \frac{1}{a}-\kfinal <
  \frac{1}{a}-\frac{1}{ae}+1 =
  \frac{e-1}{e}\cdot\frac{1}{a}+1\,.\]
\[\mbox{Thus, }
  \ALG(\sigma) \leq \frac{e-1}{e}\cdot\frac{1}{a}+1 = \frac{e-1}{e} \OPT(\sigma)+1\,.\]

\paragraph*{Case~2:}
\OPT accepts the $\frac{1}{a}$ items of size~$a$.

For the analysis of \ALG, we start by establishing an upper bound
on~$\kfinal$.  The following inequality holds since \ALG accepts one item
per round, and \ALG's level just before the last round is at
most~$1-\frac{1}{\kfinal}-\kfinal\eps$ before the last item of size
$\frac{1}{\kfinal}-\eps$ is accepted.
\begin{align*}
  & \sum_{k=\FLOOR{\frac{1}{ae}}}^{\kfinal}\left(\frac{1}{k}-\eps\right) \leq 1-(\kfinal+1)\eps \\
\Downarrow~~~ \\
  & H_{\kfinal} - H_{\FLOOR{\frac{1}{ae}}-1} -\kfinal\eps < 1 - \kfinal\eps \\
\Updownarrow~~~ \\
  & H_{\kfinal} - H_{\FLOOR{\frac{1}{ae}}-1} < 1 \\
\Downarrow~~~ \\
  & \ln(\kfinal) - \ln\left(\FLOOR{\frac{1}{ae}}\right) < 1, \mbox{ simple calculcations, detailed in Lemma~\ref{lemma-harmonic}} \\
\Updownarrow~~~ \\
  & \kfinal < e\FLOOR{\frac{1}{ae}}
\end{align*}

In the case we are treating, \ALG leaves the while-loop because its
level is more than $1-\frac{1}{\kfinal+1}-(\kfinal+1)\eps$.  Now, we give a bound
on the amount of space available at that point.  For the first
inequality, note that by the initialization of $k$ in the algorithm,
$\kfinal \geq \FLOOR{\frac{1}{ae}}$.

\begin{align*}
\frac{1}{\kfinal+1}+(\kfinal+1)\eps
& <  \frac{1}{\FLOOR{\frac{1}{ae}}+1} + \left(e\FLOOR{\frac{1}{ae}}+1\right)\eps 
 <  ae + \left(\frac{1}{a} + 1\right)\frac{a^2}{10} \\
& <  \left(e + \left(\frac{1+a}{10}\right)\right)a 
 <  3a
\end{align*}

Thus, after the while-loop, \ALG can accept at most two of the items
of size~$a$.  Clearly, the number of rounds in the while-loop is
$\kfinal-\FLOOR{\frac{1}{ae}}+1$.  Using $\kfinal < e\FLOOR{\frac{1}{ae}}$,
we can now bound \ALG's profit:
\[\ALG(\sigma)\leq \kfinal-\FLOOR{\frac{1}{ae}}+1+2<(e-1)\FLOOR{\frac{1}{ae}}+3\leq \frac{e-1}{e}\frac{1}{a}+3
=\frac{e-1}{e}\OPT(\sigma)+3\]

This establishes the bound on the competitive ratio of $\frac{e-1}{e}$.

Finally, to ensure that our proof is valid, we must argue that the
number of rounds we count in the algorithm and the sizes of items we
give are non-negative. For the remainder of this proof, we go through the terms in the algorithm,
thereby establishing this.

The largest value of $k$ in the algorithm is $\kfinal$, and we have
established that $\kfinal<e\FLOOR{\frac{1}{ae}}<\frac{1}{a}$.  Additionally,
from the start value of $k$, we know that
$\FLOOR{\frac{1}{ae}}\leq k$. Using these facts, together with the
assumption from the algorithm that $a<\frac{1}{2e}$, we get the
following bounds on the various terms.
\[
  1-\frac{1}{k}-k\eps
  > 1-\frac{1}{\FLOOR{\frac{1}{ae}}}-\frac{1}{a}\frac{a^2}{10}
  > 1-\frac{1}{\FLOOR{\frac{1}{1/2}}}-\frac{1}{20e}
  > 0
\]
Further, $\frac{1}{k}-\eps\geq\frac{1}{\kfinal}-\eps>\frac{1}{\frac{1}{a}}-\frac{a^2}{10}>0$ and $\frac{1}{a}-k\geq \frac{1}{a}-\kfinal>\frac{1}{a}-\frac{1}{a}=0$.

For the last relevant value, $1-ka\geq 1-\kfinal a>1-\frac{1}{a}a=0$ and from Case~1, we know
that the $\frac{1}{a}-\kfinal$ items given in Line~\ref{caseone} of the
algorithm sum up to at most one.
\end{proof}

\section{Untrusted Predictions}

For the case where the predictions may be inaccurate, the algorithm \ADB
can be used with \guess possibly not being $a$ as long as $r < e$, see
Subsection~\ref{sec:semitrusted}.
In Subsection~\ref{sec:MLalg}, we give an adaptive threshold
algorithm, \MLB, that works for all $r$.

For $r < \frac12(e+\sqrt{e^2-2e}) \approx 2.06$,
\ADB has a better competitive ratio than \MLB.
Thus, if an upper bound on $r$ of approximately $2$ (or lower) is known,
\ADB may be preferred, and if a
guarantee for any $r$ is needed, \MLB should
be used.

\subsection{Semi-Trusted Predictions}
\label{sec:semitrusted}

In this section, we consider the algorithm \ADB with a semi-trusted
(being guaranteed that $r<e$) prediction, \guess, instead of $a$.

\subsubsection{Positive Result}
\label{sec:advicealg}
In this section, we consider the algorithm \ADB with \guess instead of $a$.
Note that the lower bound of the theorem below is positive only when
$r < e$. For $r \geq e$, the algorithm may not accept any items, and,
hence, its competitive ratio is $0$.

\begin{theorem}
  \label{thm:adbp}
  For untrusted advice, \ADBML 
  has a competitive ratio of at least
  \begin{align*}
    c_{\ADB}(r) \geq
    \begin{cases}
      \displaystyle \frac{e-1}{e} \cdot r, & \text{if }r \leq 1\\
      \displaystyle \frac{e-r}{e}, & \text{if } r \geq 1
    \end{cases}
    \end{align*}
\end{theorem}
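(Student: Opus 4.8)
The plan is to re-run the proof of Theorem~\ref{thm:adb} essentially verbatim, keeping careful track of where the true average $a$ and the prediction $\guess$ enter. The \emph{only} difference between \ADB run on an untrusted prediction and the $\guess=a$ version already analyzed is that its threshold function is $\TF(i)=\frac{\guess e}{\guess e(i-1)+1}$. Hence every $a$ that originated from the \emph{definition} of $\TF$ must become $\guess$ (in particular the term $\frac{1}{\guess e}$ and the cap $\SIZE{x}\le\TF(1)=\guess e$), while every $a$ that came from \OPT (the average size and the bound $\OPT(\sigma)\le\frac1a$) stays equal to $a=r\guess$. I would keep the same case split: Case~1, where \ADB rejects some item because the knapsack is full, and Case~2, where it never does; and within Case~2, Subcase~2a ($\TF(i_t+1)>a$) and Subcase~2b ($\TF(i_t+1)\le a$).

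For Case~1 the identical chain of (bi)implications, now invoking Lemma~\ref{lemma-e-bound} with $\guess$ and the cap $\SIZE{x}\le\guess e$, yields $\ADB(\sigma)>\frac{e-1}{\guess e}-2e+1$. Substituting $\frac{1}{\guess e}=\frac{r}{ae}$ and using $\OPT(\sigma)\le\frac1a$ gives an asymptotic ratio of $\frac{(e-1)r}{e}$. For Case~2 I reuse $i_t=\frac{1}{\TF(i_t+1)}-\frac{1}{\guess e}$ (Eq.~(\ref{eq:it}) with $\guess$) together with Ineq.~(\ref{ineq:s}), which only uses the true average $a$. In Subcase~2b the same manipulation produces the lower bound $1-\frac{\TF(i_t+1)}{\guess e}$, and since $\TF(i_t+1)\le a=r\guess$ this is at least $1-\frac re=\frac{e-r}{e}$.

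The real work is Subcase~2a, where the quantity to bound is $\frac{\left(\frac{1}{\TF(i_t+1)}-\frac{1}{\guess e}\right)+\left(\frac{\TF(i_t+1)}{a}-1\right)\ell}{\frac{\TF(i_t+1)}{a}\,\ell}$. Rather than chase the single constant $\frac{e-1}{e}$ as in Theorem~\ref{thm:adb}, I would substitute $\frac{1}{\guess e}=\frac{r}{ae}$, set $u=\frac{\TF(i_t+1)}{a}$, and write $\ell=\frac{\beta}{\TF(i_t+1)}$ with $0<\beta\le1$ (legitimate because $\ell\le\frac{1}{\TF(i_t+1)}$). After multiplying numerator and denominator by $a$, the expression collapses to $\frac1\beta\left(\frac1u-\frac re\right)+1-\frac1u$. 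The algorithmic constraint $\TF(i_t+1)\le\TF(1)=\guess e$ is exactly $u\le\frac er$, which makes the coefficient $\frac1u-\frac re$ nonnegative; hence the expression is minimized at $\beta=1$, where it equals $1-\frac re=\frac{e-r}{e}$ independently of $u$. So both subcases of Case~2 deliver at least $\frac{e-r}{e}$.

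Combining, the guaranteed ratio is $\min\left\{\frac{(e-1)r}{e},\frac{e-r}{e}\right\}$, and the one-line comparison $(e-1)r\ge e-r\iff r\ge1$ shows this equals $\frac{(e-1)r}{e}$ for $r\le1$ and $\frac{e-r}{e}$ for $r\ge1$, as claimed (trivially for $r\ge e$, where $\frac{e-r}{e}\le0$). I expect Subcase~2a to be the only genuine obstacle, namely recognizing that the cap $u\le\frac er$ is precisely what forces the $\frac1\beta$-coefficient to be nonnegative so that $\beta=1$ (i.e.\ $\ell=\frac{1}{\TF(i_t+1)}$) is the worst case. As in Theorem~\ref{thm:adb}, the Harmonic estimate in Case~1 needs $\frac{1}{\guess e}-1\ge1$, which holds asymptotically since $\OPT(\sigma)\to\infty$ forces $\guess\to0$, and small instances are absorbed into the additive constant.
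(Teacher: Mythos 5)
Your proposal is correct and takes essentially the same approach as the paper: the identical Case~1 / Case~2 (Subcases 2a, 2b) structure inherited from Theorem~\ref{thm:adb}, with $a$ replaced by $\guess$ exactly where the threshold function is involved and kept as $a=r\guess$ where \OPT is involved, yielding $\frac{e-1}{e}\cdot r$ in Case~1 and $\frac{e-r}{e}$ in Case~2. Your Subcase~2a argument (parametrizing $\ell=\beta/\TF(i_t+1)$, $u=\TF(i_t+1)/a$ and minimizing at $\beta=1$ using $u\leq e/r$) is an equivalent algebraic reorganization of the paper's chain of biimplications, which reduces the same inequality to $\ell\leq\frac{1}{\TF(i_t+1)}$ using $\TF(i_t+1)\leq\TF(1)=\guess e$.
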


\begin{proof}
The proof is analogous to the proof of Theorem~\ref{thm:adb}.

In Case~1, replacing $a$ by \guess, since the algorithm bases its actions on \guess
instead of $a$, and setting $\OPT =
\frac{1}{r\guess}$, results in a ratio of
$$\frac{\ADB(\sigma)}{\OPT(\sigma)} \geq \frac{e-1}{e} \cdot r$$ 
instead of $\frac{e-1}{e}$.

In Case~2, the lower bound on $s$ given in Ineq.~(\ref{ineq:s}) depends on
the actual average size, $a$, whereas the value of $i_t$ given
in Eq.~(\ref{eq:it}) depends on \guess, since the algorithm uses \guess.
The subcase distinction is still based on $a$.

In Subcase~2a, we obtain
\begin{align*}
  \frac{\ADB(\sigma)}{\OPT(\sigma)}
  & \geq 
  \frac{\left(\frac{1}{\TF(i_t+1)}-\frac{1}{\guess e}\right) + s}{\ell+s}
  \geq 
  \frac{\left(\frac{1}{\TF(i_t+1)}-\frac{1}{\guess e}\right) + \left(\frac{\TF(i_t+1)}{a}-1\right)\ell}{\frac{\TF(i_t+1)}{a}\ell}\,.
\end{align*}
Going through the same calculations as in the proof of
Theorem~\ref{thm:adb}, we get that 
\begin{alignat*}{2}
&&& \frac{\left(\frac{1}{\TF(i_t+1)}-\frac{1}{\guess e}\right) +
    \left(\frac{\TF(i_t+1)}{a}-1\right)\ell}{\frac{\TF(i_t+1)}{a}\ell}  \geq \frac{e-r}{e} \\
&\Updownarrow~~~ && \\
&&&  \frac{e}{\TF(i_t+1)} - \frac{1}{\guess}
  \geq  \left(e - r \cdot \frac{\TF(i_t+1)}{r \guess}\right)\ell \\
&\Updownarrow~~~ && \\
&&&  \frac{1}{\TF(i_t+1)} \geq \ell \,.
\end{alignat*}

In subcase~2b, we obtain
\begin{align*}
  \frac{\ADB(\sigma)}{\OPT(\sigma)}
  & \geq
  \frac{\left(\frac{1}{\TF(i_t+1)}-\frac{1}{\guess e}\right) + s}{\ell+s}
    > 1 - \frac{\TF(i_t+1)}{\guess e} 
   \geq  1 - \frac{a}{\guess e}
   = 1 - \frac{r \guess}{\guess e}
   = \frac{e-r}{e}\,.
\end{align*}

Thus, we obtain a lower bound of $\frac{e-1}{e} \cdot r$ in Case~1 and
a lower bound of $\frac{e-r}{e}$ in Case~2.
For $r \leq 1$, $\frac{e-1}{e} \cdot r \leq \frac{e-r}{e}$, and for $r \geq 1$, $\frac{e-1}{e} \cdot r \geq \frac{e-r}{e}$.
\end{proof}

\subsubsection{Negative Result}

The following result shows that, for $r<e$, no algorithm can be better
than \ADB for both $r < 1$ and
$r > 1$.

\begin{restatable}{theorem}{generaluppersmallr}
  \label{thm:generaluppersmallr}
  If an algorithm is $\frac{e-r}{e}$-competitive for all $1 \leq r<e$,
  it cannot be better than $r \cdot \frac{e-1}{e}$-competitive for any
  $r \leq 1$.
  If an algorithm is better than $r \cdot \frac{e-1}{e}$-competitive for some
  $r \leq 1$, it cannot be $\frac{e-r}{e}$-competitive for all $1 \leq r<e$.
\end{restatable}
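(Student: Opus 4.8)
The two sentences of the theorem are contrapositives of one another (negating ``for all $r\le 1$, not better than $r\frac{e-1}{e}$-competitive'' yields ``for some $r\le 1$, better than that'', while negating ``$\frac{e-r}{e}$-competitive for all $1\le r<e$'' yields its failure for some such $r$), so it suffices to prove the first. Assume, then, that \ALG knows only \guess and is $\frac{e-r}{e}$-competitive for every $r\in[1,e)$, and fix an arbitrary target $r^{\ast}\le 1$. I would reuse the adaptive adversary of Algorithm~\ref{algorithm-adversarial}, but with $k_0=\FLOOR{1/(\guess e)}$ defined from the prediction rather than from the true average: the common prefix feeds rounds indexed by $k=k_0,k_0+1,\dots$, where round $k$ offers $k$ items of size $\frac1k-\eps$ and $k$ is incremented as soon as \ALG accepts one of them. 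Because the adversary moves on after a single acceptance, \ALG takes at most one prefix item per round, and since the harmonic sum of the round sizes diverges, its behaviour splits cleanly: either it accepts one item in every round until its \LEVEL reaches essentially $1$ (the ``fill'' branch, as in Case~2 of Theorem~\ref{thm:advneg}), or at some round it rejects all $k$ offered items (the ``balk'' branch, as in Case~1). The adversary observes which occurs and picks the completion accordingly; throughout I let $\guess\to 0$, so that $\OPT\to\infty$ and the additive constant of the competitive ratio is absorbed.

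In the fill branch, Lemma~\ref{lemma-harmonic} shows that \LEVEL reaches $1$ only once the round index has grown to $k_t\approx 1/\guess$, so \ALG has accepted about $k_t-k_0\approx\frac{e-1}{e}\cdot\frac1{\guess}$ prefix items. I would then complete the sequence with many items of size $r^{\ast}\guess$. Since $r^{\ast}\le 1$, these are no larger than any prefix item, so \OPT packs $\frac{1}{r^{\ast}\guess}$ of them, has average size exactly $r^{\ast}\guess$ (consistent with $r=r^{\ast}$), and value $\OPT=\frac{1}{r^{\ast}\guess}$. But \ALG is already essentially full, so it accepts only $O(1)$ of the new items; hence $\ALG\le\frac{e-1}{e}\cdot\frac1{\guess}+O(1)$ and
\[
\frac{\ALG}{\OPT}\le\frac{\frac{e-1}{e}\cdot\frac1{\guess}+O(1)}{1/(r^{\ast}\guess)}\longrightarrow r^{\ast}\frac{e-1}{e},
\]
which is exactly the claimed bound for $r^{\ast}$.

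The work lies in the balk branch, which I must rule out under the hypothesis. Suppose \ALG rejects all $k$ items of size $\frac1k-\eps$ at some round $k$ before filling; at that moment it has accepted only about $k-k_0$ items. I would pick the true average $a=r\guess$ with $r\in(1,e)$ calibrated to this balk point and complete the sequence in the Case~1 style, revealing a batch of small filler items so that \OPT takes the $k$ just-rejected items together with enough fillers to reach average size $a$ and value $\OPT=\frac{1}{r\guess}$. Because \ALG rejected the size-$(\frac1k-\eps)$ items and can gain at most a bounded number of fillers, its count stays too small to reach $\frac{e-r}{e}\,\OPT$, contradicting the assumed competitiveness at that $r$; this forces the fill branch, and the previous paragraph then completes the proof. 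The main obstacle is precisely this calibration: I must exhibit, for \emph{every} possible balk round $k$, a valid $r\in[1,e)$ — with all revealed sizes positive and \OPT's average equal to $r\guess$, in the spirit of the nonnegativity bookkeeping at the end of the proof of Theorem~\ref{thm:advneg} — for which the balking \ALG provably falls below $\frac{e-r}{e}$. Establishing this uniformly in $k$, again through the logarithmic estimates of Lemma~\ref{lemma-harmonic}, is the delicate step on which the whole argument hinges.
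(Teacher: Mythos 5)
Your proposal reproduces the paper's skeleton correctly (the contrapositive reduction, harmonic rounds keyed to $\guess$ rather than $a$, and the fill-branch completion by $\frac{1}{r^{\ast}\guess}$ items of size $r^{\ast}\guess$), but the balk branch has a genuine gap, and it is not merely the calibration issue you flag at the end. Your adversary commits to the balk completion at the \emph{first} round in which \ALG rejects all $k$ offered items. A single balk, however, is not punishable. Consider the algorithm that accepts one item per round until its level reaches $\tfrac12$ (roughly at round $k=\sqrt{e}\,k_0$ with $k_0=\FLOOR{\frac{1}{\guess e}}$) and then rejects an entire round: its count at that moment is about $(\sqrt{e}-1)k_0$, while the bound it must satisfy for the ratio $r_1=\frac{1}{k\guess}\approx\sqrt{e}$ certified by the just-rejected items is $\frac{e-r_1}{e}\,k\approx(\sqrt{e}-1)k_0$, so the requirement is met and no contradiction is available. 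Your filler completion does not rescue this case: that algorithm still has half the knapsack free, so it can absorb essentially all of the small fillers, and a short computation shows it then still meets the $\frac{e-r}{e}$ bound for the resulting $r$. This is exactly why the paper's adversary (Algorithm~\ref{algorithm-adversarial-semi}) does \emph{not} stop at the first balk: it keeps issuing rounds, and terminates in Line~\ref{line:toofew} only when \ALG's cumulative count has fallen more than the additive slack $b$ behind $k-\FLOOR{\frac{1}{\guess e}}$; only a \emph{persistent} deficit triggers the balk case.

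Once the termination rule is count-based, the calibration problem you identify as the crux disappears: no fillers are needed at all. The adversary stops, \OPT takes just the $\kfinal$ items of size $\frac{1}{\kfinal}$ from the final round, so $a=\frac{1}{\kfinal}$, and $r_1=\frac{1}{\kfinal\guess}$ lands in $[1,e)$ automatically --- the loop condition $\frac{1}{k+1}\geq\guess$ gives $r_1\geq 1$, and termination is only possible once $\kfinal\geq\FLOOR{\frac{1}{\guess e}}+b+1>\frac{1}{\guess e}$, giving $r_1<e$ --- whereupon $\ALG(\sigma)\leq\kfinal-\FLOOR{\frac{1}{\guess e}}-b-1<\frac{e-r_1}{e}\OPT(\sigma)-b$ is the desired contradiction, uniformly in the balk round. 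Your fill branch is essentially the paper's final case (Line~\ref{line:last}), including the needed bound $\kfinal<\frac{1}{\guess}$ (Ineq.~(\ref{ineq:ktupper})); but as written, your argument only handles algorithms that never balk before filling up and algorithms whose very first balk already leaves them far behind, and the intermediate behaviors are precisely the hard part of the theorem.
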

\begin{proof}
  Consider an algorithm, \ALG.
  
  Assume that \ALG is
$\frac{e-r}{e}$-competitive for all $1 \leq r<e$. Then there exists a
  constant, $b$, such that $\ALG(\sigma) \geq 
\frac{e-r}{e} \OPT(\sigma)-b$, for any sequence $\sigma$ and any $1
\leq r < e$.
This constant $b$ is given as a parameter to
Algorithm~\ref{algorithm-adversarial-semi}, constructing an adversarial sequence, $\sigma$.
\begin{algorithm}
  \begin{algorithmic}[1]
    \LeftComment{Assume $\guess < \frac{1}{2e+b}$ and $\frac{1}{r_2 \guess} \in \mathbb{N}$}
    \State $k$ \ASSIGN $\FLOOR{\frac{1}{\guess e}}-1$ \linespread{1.2}\selectfont
    \While{\ALG's $\LEVEL \leq 1-\frac{1}{k+1}-(b+1)\guess e$ and $\frac{1}{k+1} \geq \guess$} \linespread{1}\selectfont
       \State $k$++
       \For{$k$ times}
          \State Give an item of size $\frac{1}{k}$
          \If{\ALG accepts}
             \State \textbf{continue} (* the while-loop *)
          \EndIf
       \EndFor
       \If{\ALG has accepted fewer than $k-\FLOOR{\frac{1}{\guess e}}-b$ items}
          \State {\bf terminate}\label{line:toofew}
       \EndIf
    \EndWhile
    \State Give $\frac{1}{r_2 \guess}$ items of size $r_2 \guess$\label{line:last}
\end{algorithmic}
\caption{Adversarial sequence for $r<e$. The 
adversarial algorithm takes two parameters, $r_2<1$ and $b \geq 0$.}
\label{algorithm-adversarial-semi}
\end{algorithm}

Let \kfinal be the value of $k$ at the end of the last iteration
of the while-loop. 

If the adversarial algorithm terminates in Line~\ref{line:toofew},
then \ALG has accepted at most $\kfinal-\FLOOR{\frac{1}{\guess e}}-b-1$
items. For termination in Line~\ref{line:toofew}, \ALG has not accepted
any of the $\kfinal$ items in the for-loop immediately preceding this,
so $\kfinal$ items of size $\frac{1}{\kfinal}$ were given.
In this case, \OPT accepts exactly these \kfinal items from the last
iteration of the while-loop, and $a=\frac{1}{\kfinal}$.
Let $r_1 = a/\guess$. Since $\frac{1}{\kfinal}  \geq \guess$, $r_1 \geq 1$.
Then,
\begin{align*}
  \ALG(\sigma) & \leq \kfinal-\FLOOR{\frac{1}{\guess e}}-b-1
   < \OPT - \frac{1}{\guess e} - b 
   = \OPT - \frac{r_1}{ae} - b 
   = \OPT - \frac{r_1}{e}\OPT - b \\
  & = \frac{e-r_1}{e} \OPT - b,
\end{align*}
contradicting that $\ALG(\sigma) \geq \frac{e-r}{e} \OPT(\sigma)-b$.
Thus, 
the adversarial algorithm cannot terminate in Line~\ref{line:toofew}.

Since the adversarial algorithm does not terminate in
Line~\ref{line:toofew},
it must accept its first item no later than in the $(b+2)$nd iteration
of the while-loop, and
the $i$th item accepted by \ALG has size at least
$\frac{1}{\FLOOR{\frac{1}{\guess e}}+b+i} \geq \frac{1}{\frac{1}{\guess e}+b+i}$.
Thus, the total size, $S_t$, of the items accepted by \ALG in the while-loop
is
\begin{alignat*}{2}
  S_t & \geq \sum_{k=\frac{1}{\guess e}+b+1}^{\kfinal} \frac{1}{k}\\
  & =  \sum_{k=1}^{\kfinal} \frac{1}{k}-
       \sum_{k=1}^{\frac{1}{\guess e}-1} \frac{1}{k}-
       \sum_{k=\frac{1}{\guess e}}^{\frac{1}{\guess e}+b} \frac{1}{k}  \\
  & > H_{\kfinal} - H_{\frac{1}{\guess e}-1} - (b+1)\guess e\\
  & \geq \ln(\kfinal) - \ln\left(\frac{1}{\guess e} \right) - (b+1)\guess e,
  && \text{ by Lemma~\ref{lemma-harmonic}.}
\end{alignat*}
Thus, we have
\begin{align}
  \label{ineq:St}
  S_t > \ln(\kfinal) - \ln\left(\frac{1}{\guess e} \right) - (b+1)\guess e\,.
\end{align}
By the first condition of the while-loop, and since \ALG accepts at
most one item per iteration, $S_t \leq 1-(b+1)\guess e$.
By Ineq.~(\ref{ineq:St}), this means that $\ln(\kfinal) -
\ln\left(\frac{1}{\guess e} \right) - (b+1)\guess e <
1-\frac{1}{\kfinal}-(b+1)\guess e$, and we get
\begin{align}
& \ln(\kfinal) -\ln\left(\frac{1}{\guess e} \right) - (b+1)\guess e <
1-(b+1)\guess e\nonumber\\
\Updownarrow~~~\nonumber\\
& \ln(\kfinal) -\ln\left(\frac{1}{\guess e} \right) < 1\nonumber\\
\Updownarrow~~~\nonumber\\
& \ln \left( \frac{\kfinal}{\frac{1}{\guess e}} \right) < 1\nonumber\\
\Updownarrow~~~\nonumber\\
& \frac{\kfinal}{\frac{1}{\guess e}} < e\nonumber\\
\Updownarrow~~~\nonumber\\
& \kfinal < \frac{1}{\guess} \label{ineq:ktupper} \,.
\end{align}

Furthermore, by the conditions of the while-loop, we have that 
$S_t > 1-\frac{1}{\kfinal+1} - (b+1)\guess e$ or
$\frac{1}{\kfinal+1} < \guess$.

If $S_t > 1-\frac{1}{\kfinal+1} - (b+1)\guess e$, then,
using that $k_t\geq \FLOOR{\frac{1}{\guess e}}$,
$$S_t >  1-\frac{1}{ \FLOOR{\frac{1}{\guess e}}+1} - (b+1)\guess e >  1-\frac{1}{\frac{1}{\guess e}} - (b+1)\guess e =  1-\guess e - (b+1)\guess e = 1 - (b+2)\guess e\,.$$
Otherwise, we get
\begin{align}
& \frac{1}{\kfinal+1} < \guess\nonumber\\
  \Updownarrow~~~\nonumber\\
& \kfinal > \frac{1}{\guess}-1 \,.
\end{align}
Plugging this into Ineq.~(\ref{ineq:St}), we get
\begin{align*}
  S_t & > \ln\left(\frac{1}{\guess}-1\right) -
  \ln\left(\frac{1}{\guess e} \right) - (b+1)\guess e\\
  & = \ln\left(\frac{\frac{1}{\guess}-1}{\frac{1}{\guess e}} \right)
  - (b+1)\guess e\\
  & = \ln\left(e-\guess e \right) - (b+1)\guess e\\
  & = 1 + \ln(1-\guess) -(b+1)\guess e \\
  & > 1-(b+2)\guess e, \text{ since } \guess < \frac12 \,.
\end{align*}
Thus, in either case, we get
$S_t > 1-(b+2)\guess e$.
Therefore, the algorithm can fit at most $\frac{(b+2)\guess e}{r_2
  \guess} = \frac{(b+2)e}{r_2}$ of the items of size $r_2 \guess$ into
its knapsack.
Since \ALG packs at most one item per iteration of the while-loop,
this means that
\begin{align*}
\ALG(\sigma)
  & \leq \kfinal - \FLOOR{\frac{1}{\guess e}} + 1 +
    \frac{(b+2)e}{r_2} \\
  & < \kfinal - \frac{1}{\guess e} + 2 + \frac{(b+2)e}{r_2} \\
  & < \frac{1}{\guess} - \frac{1}{\guess e} + 2 +
    \frac{(b+2)e}{r_2}, \text{ by Ineq.~(\ref{ineq:ktupper})} \\
  & = \frac{e-1}{\guess e} + 2 + \frac{(b+2)e}{r_2}\\
  & = r_2 \frac{e-1}{e}\OPT(\sigma) + 2 + \frac{(b+2)e}{r_2} \,.
\end{align*}
For any $r < 1$, this yields an upper bound on the competitive ratio
of $r \frac{e-1}{e}$, since for any given $r$, $ 2 + \frac{(b+2)e}{r}$
is a constant.  

This proves the first part of the theorem.
The second part of the theorem is just the contrapositive of the first part.
\end{proof}

Combining 
the positive result from
Theorem~\ref{thm:adbp} with the negative result from
Theorem~\ref{thm:generaluppersmallr}, we obtain that, if $r$ is
guaranteed to be smaller than $e$, no algorithm can be better than
\ADB for both $r < 1$ and $r > 1$.

\begin{restatable}{theorem}{ATsemi}
  \label{thm:ATsemi}
  \ADBML has a competitive ratio of
  $$c_{\ADBML}(r) =
    \begin{cases}
      \displaystyle r \cdot \frac{e-1}{e}, & \text{if } r \leq 1\\[2ex]
      \displaystyle \frac{e-r}{e},         & \text{if } 1 \leq r \leq e \\[1ex]
      \displaystyle 0,                     & \text{if } r \geq e
    \end{cases}
  $$
\end{restatable}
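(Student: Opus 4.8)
The plan is to prove the stated equality by matching the lower bounds already in hand with new upper bounds, handling the three ranges of $r$ separately. The lower bounds $c_{\ADBML}(r) \geq r\frac{e-1}{e}$ for $r \leq 1$ and $c_{\ADBML}(r) \geq \frac{e-r}{e}$ for $1 \leq r < e$ are exactly the conclusion of Theorem~\ref{thm:adbp} (the latter bound being vacuous once $r \geq e$), and the value $0$ for $r \geq e$ needs no lower bound. Hence the whole task reduces to the three matching upper bounds: $c_{\ADBML}(r) \leq r\frac{e-1}{e}$ for $r \leq 1$, $c_{\ADBML}(r) \leq \frac{e-r}{e}$ for $1 \leq r < e$, and $c_{\ADBML}(r) = 0$ for $r \geq e$.

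For the range $r \leq 1$ I would not construct a new adversary but instead invoke Theorem~\ref{thm:generaluppersmallr}. By the lower bound of Theorem~\ref{thm:adbp}, \ADBML is $\frac{e-r}{e}$-competitive for every $1 \leq r < e$, so it satisfies the hypothesis of the first implication of Theorem~\ref{thm:generaluppersmallr}. That implication then forces \ADBML to be no better than $r\frac{e-1}{e}$-competitive for any $r \leq 1$, which is precisely the upper bound required. Together with the lower bound this pins down $c_{\ADBML}(r) = r\frac{e-1}{e}$ on $r \leq 1$.

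For $1 \leq r < e$ (and, with the same input, $r \geq e$) I would use one very simple adversarial sequence: $\frac{1}{a}$ items all of size exactly $a = r\guess$, with $\guess$ chosen arbitrarily small (and such that $\frac{1}{a}$ is an integer) so that $\OPT(\sigma) = \frac{1}{a} \to \infty$. Then \OPT accepts all $\frac{1}{a}$ items, whose average size is $a$, consistent with the prediction $\guess$. The core of the argument is to trace \ADBML on this input: since all accepted items have the common size $a$, the quantity $\ITEMS{\TF(j+1)}$ jumps from $0$ to the current count $m$ of accepted items exactly when $a > \TF(j+1)$, so the index $i$ of Line~\ref{alg:generici} equals $m$ as soon as $a > \TF(m+1)$, and \ADBML then rejects every subsequent item because $a > \TF(i+1)$. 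Solving $a > \TF(m+1) = \frac{\guess e}{\guess e\, m + 1}$ gives that this first occurs at $m > \frac{e-r}{r\guess e}$. Thus \ADBML accepts at most $\frac{e-r}{r\guess e} + O(1)$ items; dividing by $\OPT(\sigma) = \frac{1}{r\guess}$ gives a ratio of $\frac{e-r}{e}$, so $\ADBML(\sigma) \leq \frac{e-r}{e}\OPT(\sigma) + O(1)$. For $r \geq e$ this count is non-positive: already $\TF(1) = \guess e \leq a$, so \ADBML rejects every item from the outset, accepting only $O(1)$ while $\OPT \to \infty$, which gives competitive ratio $0$. One also checks that for $r < e$ the level attained, $\frac{e-r}{r\guess e}\cdot a = \frac{e-r}{e} < 1$, stays below capacity, so acceptance is genuinely halted by the threshold rather than by the capacity constraint.

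The main obstacle I anticipate is the precise bookkeeping of the index $i$ on this size-$a$ sequence: because all items are identical, $\ITEMS{\TF(j+1)}$ is a step function that jumps from $0$ to $m$ at the threshold crossing, and one must argue carefully that $i$ tracks $m$ and that acceptance stops at the correct count up to an additive constant. This is essentially the same threshold-crossing analysis already carried out in the subcases of the proof of Theorem~\ref{thm:adb}, but now with the true average $a$ and the predicted value $\guess$ decoupled via $a = r\guess$. The $r \leq 1$ range, by contrast, is almost free once one observes that \ADBML meets the hypothesis of Theorem~\ref{thm:generaluppersmallr}; the only point requiring care there is that the additive constant fed to that adversary is a fixed constant for \ADBML, so the resulting bound is genuinely a statement about its asymptotic competitive ratio.
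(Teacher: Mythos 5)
Your proof is correct, and for two of its three components it coincides with the paper's: the lower bounds are quoted from Theorem~\ref{thm:adbp}, and the upper bound for $r \leq 1$ is obtained exactly as in the paper, by feeding the uniform-constant competitiveness of \ADBML on $[1,e)$ into the first implication of Theorem~\ref{thm:generaluppersmallr} (your remark that the additive constant from Theorem~\ref{thm:adbp} is uniform in $r$ is precisely the care point that makes this application legitimate). Where you genuinely diverge is the upper bound for $1 \leq r < e$. The paper disposes of this range with the single sentence that it ``follows from Theorem~\ref{thm:generaluppersmallr}''; but that theorem, as stated, is a trade-off result (competitiveness on all of $[1,e)$ constrains only what happens at $r \leq 1$), so it does not literally deliver a per-point bound of $\frac{e-r}{e}$ at each $r_0 \in (1,e)$. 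You instead construct the bound directly: $\frac{1}{r\guess}$ identical items of size $a = r\guess$, with the threshold-crossing bookkeeping showing $i$ jumps from $0$ to the accepted count $m$ once $\TF(m+1) < a$, so \ADBML accepts $\frac{e-r}{r\guess e} + O(1) = \frac{e-r}{e}\OPT(\sigma) + O(1)$ items, and the capacity check $\frac{e-r}{e} < 1$ confirms the threshold, not the knapsack, is what stops it. This is in fact the paper's own construction for $r > e$, extended downward to all of $[1,e)$, and it makes the argument self-contained where the paper's citation is at best elliptical; the cost is a page of explicit index-tracking that the paper avoids. Your analysis of that tracking (including the boundary case $r \geq e$, where only $O(1)$ items are ever accepted) is sound.
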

\begin{proof}
  The lower bounds follow from Theorem~\ref{thm:adbp}.

  The upper bound for $1 \leq r < e$ follows from
  Theorem~\ref{thm:generaluppersmallr}.
  Since \ADBML is $\frac{e-r}{e}$-competitive for $1 \leq r < e$, the
  upper bound for $r < 1$ also follows from
  Theorem~\ref{thm:generaluppersmallr}.

  For $r > e$, consider the input sequence consisting of
  $\frac{1}{r\guess}$ items of size $r\guess$.
  \OPT accepts all items and \ADBML accepts none.
\end{proof}

\subsection{Untrusted Predictions}
\label{sec:MLalg}

\subsubsection{Positive Result}
\label{positiveML}
When considering the case where the average item size is estimated to
be $\guess$, and the accurate value is $a=r\cdot \guess$, we
consider two cases, $r>1$ and $r<1$. In either case, we have the
problem that we do not even know which case we are in, so, when large
items arrive, we have to accept some to be competitive. The algorithm
we consider when the value of $r$ is not necessarily one achieves
similar competitive ratios in both cases.
Algorithm~\ref{algorithm-ml}, \MLB, is \bALG with a different threshold
function than was used for accurate advice (and in \ADBML).

\begin{algorithm}
\begin{algorithmic}[1]
\State Define $\TF(i)=\sqrt{\frac{\guess}{2i}}$ for $i \geq 1$
\State Run \bALG, Algorithm~\ref{algorithm-bounds}
\end{algorithmic}
\caption{\ubALG, \MLB.}
\label{algorithm-ml}
\end{algorithm}

Since we need to accept larger items than in the case of accurate
advice, we need a threshold function that decreases faster than the
threshold function used in Section~\ref{sec:advice}, in order not to risk
filling up the knapsack before the small items arrive.  Therefore, it may
seem surprising that we are using a threshold function that decreases
as $\frac{1}{\sqrt{i}}$, when the threshold function of
Section~\ref{sec:advice} decreases as $\frac1i$.  However, the
$\frac1i$-function of the algorithm for accurate advice is essentially
offset by $\frac{1}{ae}$.

We prove a number of more or less technical results before stating the
positive results for $r \leq 1$ (Theorem~\ref{thm:mlrlessthan1}) and
$r \geq 1$ (Theorem~\ref{thm:mlrgreaterthanone}).

\begin{restatable}{lemma}{sizeupper}
\label{lemma:sizeupper}
For any $k \geq 1$, the total size of the $k$ largest items accepted
by \MLB is at most $\sqrt{2k\guess}$.
\end{restatable}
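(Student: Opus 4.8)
The plan is to bound each of the $k$ largest accepted items individually using the structural guarantee of the template algorithm \bALG, and then to sum these per-item bounds, estimating the resulting sum by an integral. Recall from the analysis of Algorithm~\ref{algorithm-bounds} (already invoked in the proof of Theorem~\ref{thm:adb}) that $\TF(j)$ is an upper bound on the size of the $j$th largest item accepted by the algorithm. Since \MLB runs \bALG with $\TF(j)=\sqrt{\guess/(2j)}$, the $j$th largest accepted item therefore has size at most $\sqrt{\guess/(2j)}$. The first step is simply to invoke this property; the only care needed is that it holds simultaneously for every rank $j$, which it does because $i$ is nondecreasing through the processing of the input, so the size restriction never becomes less restrictive over time.

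Summing the per-item bounds over the $k$ largest accepted items gives an upper bound on their total size of
\[
\sum_{j=1}^{k}\TF(j)=\sqrt{\frac{\guess}{2}}\sum_{j=1}^{k}\frac{1}{\sqrt{j}}.
\]
The next step is to bound the sum $\sum_{j=1}^{k} j^{-1/2}$. Since $x\mapsto x^{-1/2}$ is decreasing on $(0,\infty)$, we have $j^{-1/2}\leq \int_{j-1}^{j}x^{-1/2}\,dx$ for each $j\geq 1$, and summing these telescopes to
\[
\sum_{j=1}^{k}\frac{1}{\sqrt{j}}\leq \int_{0}^{k}x^{-1/2}\,dx = 2\sqrt{k}.
\]
Plugging this in yields a total of at most $\sqrt{\guess/2}\cdot 2\sqrt{k}$, which simplifies to $\sqrt{2k\guess}$, as claimed.

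I expect no serious obstacle: the argument reduces to a single integral estimate once the per-item threshold bound is invoked. The only points requiring a little attention are the algebraic simplification $\sqrt{\guess/2}\cdot 2\sqrt{k}=\sqrt{2k\guess}$, and the reliance on the template's guarantee that no accepted item of rank $j$ exceeds $\TF(j)$. Since this guarantee is established once for Algorithm~\ref{algorithm-bounds} independently of the particular threshold function, it transfers directly to the choice $\TF(i)=\sqrt{\guess/(2i)}$ used by \MLB, and no reproof is needed here.
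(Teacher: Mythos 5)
Your proof is correct and follows essentially the same route as the paper's own proof: bound the $j$th largest accepted item by $\TF(j)=\sqrt{\guess/(2j)}$ (a consequence of the acceptance test in Algorithm~\ref{algorithm-bounds}), sum over $j=1,\ldots,k$, and bound the sum by the integral $\int_0^k x^{-1/2}\,dx = 2\sqrt{k}$. The only cosmetic difference is that you cite the per-rank bound as already established for the template, while the paper briefly re-derives it inside the lemma's proof; the substance is identical.
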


\begin{proof}
  By the test in \MLB, as soon as $i$ items of size greater than
  $\TF(i+1)$ have been accepted, no more items larger than $\TF(i+1)$ are
  accepted after that.  Thus, for each $i \geq 0$, at most $i$ items
  of size greater than $\sqrt{\frac{\guess}{2(i+1)}}$ are accepted.  This
  means that the $i$th largest item accepted by \MLB has size at most
  $\sqrt{\frac{\guess}{2i}}$.  Thus, the total size of the $k$ largest
  accepted items is bounded by
$$\sum_{i=1}^{k}
\sqrt{\frac{\guess}{2i}}\leq  \sqrt{\frac{\guess}{2}}\int_0^{k} \frac{1}{\sqrt{i}}di
= \sqrt{\frac{\guess}{2}} \cdot 2\sqrt{k}=\sqrt{2k\guess},$$
since $f(i)=\frac{1}{\sqrt{i}}$ is a decreasing function.
\end{proof}

\begin{restatable}{corollary}{lowerhighlevel}
  \label{cor:lowerhighlevel}
  If \MLB rejects an item based on the level being too high, it has
  accepted at least $\lfloor\frac{1}{2\guess}\rfloor$ items.
\end{restatable}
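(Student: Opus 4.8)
The plan is to examine the single moment at which \MLB performs a level-based rejection and convert the two facts available there into a lower bound on the number of accepted items. Suppose \MLB rejects an item $x$ because the level is too high. Since the level is the binding constraint, the size part of the test was satisfied, so for the index $i$ computed in the test we have $\SIZE{x}\le\TF(i+1)\le\TF(1)=\sqrt{\guess/2}$ (using $i\ge0$ and that $\TF$ is decreasing), while at the same time $\LEVEL+\SIZE{x}>1$. Let $k$ be the number of items \MLB has accepted at this point; the goal is to show $k\ge\FLOOR{\frac{1}{2\guess}}$.

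First I would bound the level. The proof of Lemma~\ref{lemma:sizeupper} shows that the $m$th largest accepted item has size at most $\TF(m)$, so $\LEVEL\le\sum_{m=1}^{k}\TF(m)$. Adding the rejected item and using $\SIZE{x}\le\TF(1)$ gives
\[
1<\LEVEL+\SIZE{x}\le\sum_{m=1}^{k}\TF(m)+\TF(1).
\]
The crux is then the estimate $\sum_{m=1}^{k}\TF(m)+\TF(1)\le\sqrt{2k\guess}$. Since $\TF$ is decreasing, $\sum_{m=2}^{k}\TF(m)\le\int_{1}^{k}\TF(x)\,dx=\sqrt{2k\guess}-\sqrt{2\guess}$, and $2\TF(1)=\sqrt{2\guess}$, so the two surplus $\TF(1)$ terms are exactly cancelled by the integral and the bound follows. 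Combining, $1<\sqrt{2k\guess}$, i.e.\ $k>\frac{1}{2\guess}$, which yields $k\ge\FLOOR{\frac{1}{2\guess}}$.

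The step I expect to be the main obstacle is precisely this absorption of the extra threshold term. The temptation is to use Lemma~\ref{lemma:sizeupper} as a black box ($\LEVEL\le\sqrt{2k\guess}$) together with $\SIZE{x}\le\TF(1)$, but that only gives $\sqrt{2k\guess}>1-\sqrt{\guess/2}$, which is too weak to reach $\FLOOR{\frac{1}{2\guess}}$. It is equally tempting to regard $x$ as a $(k+1)$st accepted item and invoke the lemma for $k+1$ items, but $x$ need not respect the per-rank threshold bound $\TF(m)$, so that shortcut is not justified directly. Keeping the discrete sum $\sum_{m=1}^{k}\TF(m)$ and absorbing the single surplus $\TF(1)$ into the gap between the Riemann sum and its integral is what makes the argument go through cleanly.
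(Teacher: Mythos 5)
Your proof is correct, and it handles the crux differently from the paper. The paper's proof is a one-liner: it treats the rejected item as a $(k+1)$st item and applies Lemma~\ref{lemma:sizeupper} to all $k+1$ items, getting $1 < \LEVEL + \SIZE{x} \leq \sqrt{2(k+1)\guess}$ and hence $k > \frac{1}{2\guess}-1 \geq \FLOOR{\frac{1}{2\guess}}-1$ --- exactly the ``shortcut'' you declined to take. Your version instead applies the per-rank bound $\TF(m)$ only to the items actually accepted, pays for the rejected item with $\SIZE{x}\leq\TF(1)$, and absorbs that surplus $\TF(1)$ into the slack between $\sum_{m=2}^{k}\TF(m)$ and $\int_1^k\TF(x)\,dx$; this is slightly longer but yields the marginally stronger conclusion $k > \frac{1}{2\guess}$ and never reasons about a hypothetical acceptance. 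Note, though, that your stated objection to the paper's route --- that $x$ ``need not respect the per-rank threshold bound'' --- is overly pessimistic: since $x$ passed the size test, $\SIZE{x}\leq\TF(i+1)$ for the current index $i$, and by maximality of $i$ the number of accepted items exceeding $\TF(j+1)$ is at most $j-1$ for every $j>i$, so adding $x$ to the accepted set preserves the invariant that at most $j$ items exceed $\TF(j+1)$ for every $j\geq 0$; the counting argument in the proof of Lemma~\ref{lemma:sizeupper} then goes through verbatim for the $k+1$ items. So the paper's application is sound, but it does gloss over this justification, whereas your integral-slack argument makes the issue moot.
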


\begin{proof}
  If \MLB has accepted $k$ items when it receives an item with a size
  no larger than the current bound, $\TF(i+1)$, that does not fit in the
  knapsack, then by Lemma~\ref{lemma:sizeupper}, $\sqrt{2(k+1)\guess} > 1$.
  Now,
$$\sqrt{2(k+1)\guess} > 1 \Leftrightarrow k > \frac{1}{2\guess}-1 \Rightarrow k
\geq \left\lfloor \frac{1}{2\guess} \right\rfloor\,.$$
\end{proof}

The following corollary implies that \MLB never rejects an item based
on the level being too high if $r>2$.  This is because $r>2$ means
that the items in \OPT are relatively large compared to $\guess$.  Since
\OPT accepts the smallest items of the sequence, it means that the
sequence contains relatively few small items.  Thus, the algorithm
reserves space for small items that never arrive.

\begin{restatable}{corollary}{levelreject}
  \label{levelreject}
  If \MLB rejects an item based on the level being too high,
   $\MLB(\sigma) >   \frac{r}{2}\OPT(\sigma)-1$.
\end{restatable}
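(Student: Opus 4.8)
The plan is to derive the bound directly from Corollary~\ref{cor:lowerhighlevel} together with the elementary capacity bound on \OPT, so that no new adversary or summation argument is needed.

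First, I would invoke the hypothesis: we are precisely in the case where \MLB rejects an item because the level is too high, which is exactly the premise of Corollary~\ref{cor:lowerhighlevel}. That corollary hands us the lower bound $\MLB(\sigma) \geq \FLOOR{\frac{1}{2\guess}}$. Using the standard fact that $\FLOOR{x} > x-1$ for every real $x$, this sharpens to $\MLB(\sigma) > \frac{1}{2\guess} - 1$.

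Next I would bound \OPT from above. Since \OPT's accepted items have average size $a = r\guess$ and their total size cannot exceed the capacity $1$, we get $\OPT(\sigma) \leq \frac{1}{a} = \frac{1}{r\guess}$, the same observation already used in the proof of Theorem~\ref{thm:adb}. Multiplying through by $\frac{r}{2}$ turns this into $\frac{r}{2}\OPT(\sigma) \leq \frac{1}{2\guess}$.

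Chaining the two bounds then yields $\MLB(\sigma) > \frac{1}{2\guess} - 1 \geq \frac{r}{2}\OPT(\sigma) - 1$, which is exactly the claim. I do not expect any genuine obstacle here; the only point requiring care is the floor appearing in Corollary~\ref{cor:lowerhighlevel}, which can lose up to one accepted item, but this loss is absorbed precisely by the additive $-1$ in the statement, so the inequality holds with the stated additive constant. The conceptual content is simply that \MLB's guaranteed fill of $\FLOOR{\frac{1}{2\guess}}$ items already dominates $\frac{r}{2}$ times the at most $\frac{1}{r\guess}$ items that \OPT can pack.
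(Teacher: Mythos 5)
Your proof is correct and takes essentially the same route as the paper's: both invoke Corollary~\ref{cor:lowerhighlevel}, drop the floor via $\FLOOR{x} > x - 1$ to get $\MLB(\sigma) > \frac{1}{2\guess}-1$, and combine this with the capacity bound $\OPT(\sigma) \leq \frac{1}{r\guess}$ to conclude. There is nothing to add; the argument is exactly the paper's one-line chain of inequalities.
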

\begin{proof}
  By Corollary~\ref{cor:lowerhighlevel},
   $$\MLB(\sigma) > \frac{1}{2\guess}-1 =
  \frac{r}{2}\cdot \frac{1}{r\guess} -1 \geq  \frac{r}{2}\OPT(\sigma)-1\,.$$
\end{proof}

\begin{restatable}{lemma}{ratiocalc}
  \label{lemma:ratiocalc}
  Assume that $r, \guess, q > 0$, $i \geq 0$, and
  $\ell < \sqrt{\frac{2(i+1)}{\guess}}$.    If
$$ 2r\sqrt{2\guess(i+1)}-2r\guess(i+1)+q-2 \leq 0,$$ then 
$$ \frac{(i+1)+\left(
    \frac{1}{r\sqrt{2\guess(i+1)}} -1\right)\ell}{\frac{\ell}{r\sqrt{2\guess(i+1)}}} > \frac{q}{2}\,.$$
\end{restatable}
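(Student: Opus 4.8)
The plan is to collapse the target inequality to one elementary comparison by introducing the shorthand $A=\sqrt{2\guess(i+1)}>0$ and rewriting both the hypothesis and the conclusion in terms of $A$. First I would record two identities: $2\guess(i+1)=A^2$, so that $2r\guess(i+1)=rA^2$; and $\sqrt{2(i+1)/\guess}=2(i+1)/A$, which follows directly from $A=\sqrt{2\guess(i+1)}$. With the first identity, the hypothesis $2r\sqrt{2\guess(i+1)}-2r\guess(i+1)+q-2\leq 0$ becomes $q\leq 2+rA(A-2)$, hence $q/2\leq 1+\tfrac{1}{2}rA(A-2)$.

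Next I would simplify the left-hand side of the conclusion. Writing $\sqrt{2\guess(i+1)}=A$ and peeling off the common term $\frac{\ell}{rA}$ appearing in both numerator and denominator, the quotient collapses to
\[
\frac{(i+1)+\left(\frac{1}{rA}-1\right)\ell}{\frac{\ell}{rA}}
= 1+\frac{rA\big((i+1)-\ell\big)}{\ell}\,.
\]
Here I note that $\ell>0$ is required both for the denominator to be defined and for the sign manipulations below to preserve inequality directions; in the intended application $\ell$ is a positive count of the large items in the optimal solution, so this is harmless, but it is worth stating since it is not listed among the explicit hypotheses.

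It then suffices to prove that $1+\frac{rA((i+1)-\ell)}{\ell}>1+\tfrac{1}{2}rA(A-2)$, because the right-hand side dominates $q/2$ by the hypothesis. Since $rA>0$ and $\ell>0$, cancelling the $1$, dividing by $rA$, and clearing denominators reduces this to the single clean inequality $\frac{2(i+1)}{\ell}>A$. Finally, this is exactly the assumption $\ell<\sqrt{2(i+1)/\guess}$ in disguise: rewriting that bound as $\ell<2(i+1)/A$ and using $\ell,A>0$ gives $\frac{2(i+1)}{\ell}>A$ at once. Chaining $\text{(LHS)}>1+\tfrac12 rA(A-2)\geq q/2$ yields the strict inequality claimed.

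I do not expect a genuine obstacle here, as the argument is essentially bookkeeping once the substitution $A=\sqrt{2\guess(i+1)}$ is in place; the only points demanding care are tracking the positivity assumptions ($\ell>0$ and $rA>0$) so each rearrangement keeps the inequality in the right direction, and recognizing that the constraint imposed on $\ell$ in the statement is precisely the reduced inequality $\frac{2(i+1)}{\ell}>A$, so that all the available slack comes from the hypothesis on $q$ and none is wasted.
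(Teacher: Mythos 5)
Your proof is correct and takes essentially the same route as the paper's: the paper carries out the very same algebra as a chain of (bi)implications from the hypothesis on $q$ to the stated ratio bound, invoking $\ell < \sqrt{2(i+1)/\guess}$ at exactly the point where you do; your substitution $A=\sqrt{2\guess(i+1)}$ and working backward from the conclusion is only a reorganization. Your explicit remark that $\ell > 0$ is needed is well taken — the paper's final step (dividing by $2\ell/\sqrt{2\guess(i+1)}$) silently requires it as well, and it is harmless since in the lemma's applications $\ell$ is a positive count of items in \OPT's solution.
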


\begin{proof}
  \[\begin{array}{cl}
      &  2r\sqrt{2\guess(i+1)}-2r\guess(i+1)+q-2 \leq 0 \\
      \Updownarrow \\
      &  2r\sqrt{2\guess(i+1)}-2+q \leq 2r(i+1)\guess \\
      \Updownarrow \\
      &  \sqrt{\frac{2(i+1)}{\guess}}(2r\sqrt{2\guess(i+1)}-2+q) \leq 2r(i+1)\sqrt{2\guess(i+1)} \\
      \Downarrow \\
      &  \ell(2r\sqrt{2\guess(i+1)}-2+q) <2r(i+1)\sqrt{2\guess(i+1)}, \text{ since } \ell < \sqrt{\frac{2(i+1)}{\guess}} \\
      \Updownarrow \\
      &  \ell q< 2r(i+1)\sqrt{2\guess(i+1)}+2\ell-2\ell r\sqrt{2\guess(i+1)} \\
      \Updownarrow \\
      &  \frac{\ell q}{\sqrt{2\guess(i+1)}}< 2r(i+1)+\frac{2\ell}{\sqrt{2\guess(i+1)}}-2\ell r \\
      \Updownarrow \\
      &  \frac{\ell q}{\sqrt{2\guess(i+1)}}< 2r((i+1)+\frac{\ell}{r\sqrt{2\guess(i+1)}}-\ell)  \\
      \Updownarrow \\
      &   \frac{q}{2} < \frac{(i+1)+\left(
        \frac{1}{r\sqrt{2\guess(i+1)}} -1\right)\ell}{\frac{\ell}{r\sqrt{2\guess(i+1)}}}
\end{array}\]
\end{proof}

\begin{restatable}{lemma}{slowerlupper}
  \label{lemma:slowerlupper}
  Assume that $\OPT$ accepts $\ell$ items larger than
  $\sqrt{\frac{\guess}{2(i+1)}}$ and $s$ items of size at most
  $\sqrt{\frac{\guess}{2(i+1)}}$, $i \geq 0$.  Then, the following
  inequalities hold:
    \begin{enumerate}
    \item
      $\displaystyle s >
      \ell\left(\frac{1}{r\sqrt{2\guess(i+1)}}-1\right)$ \label{slower}
    \item $\displaystyle \ell < \sqrt{\frac{2(i+1)}{\guess}}$ \label{lupper}
    \end{enumerate}
\end{restatable}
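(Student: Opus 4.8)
The plan is to derive both inequalities directly from two elementary facts about $\OPT$: its total packed size never exceeds the knapsack capacity $1$, and the average size of its accepted items equals $a = r\guess$. These are precisely the two constraints used in the proof of Theorem~\ref{thm:adb} to obtain Ineq.~(\ref{ineq:s}) and the bound $\frac{1}{\TF(i_t+1)} \geq \ell$ in Subcase~2a; here I adapt them to the new threshold value $\TF(i+1) = \sqrt{\frac{\guess}{2(i+1)}}$.

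For Inequality~\ref{slower}, I would start from the fact that $\OPT$ accepts $\ell + s$ items whose average size is $a = r\guess$, so their total size is exactly $r\guess(\ell+s)$. Each of the $\ell$ large items has size strictly greater than $\TF(i+1) = \sqrt{\frac{\guess}{2(i+1)}}$, while the $s$ small items contribute at least $0$, giving $r\guess(\ell+s) > \ell\sqrt{\frac{\guess}{2(i+1)}}$. Dividing by $r\guess$ and simplifying $\frac{1}{\guess}\sqrt{\frac{\guess}{2(i+1)}} = \frac{1}{\sqrt{2\guess(i+1)}}$ yields $\ell + s > \frac{\ell}{r\sqrt{2\guess(i+1)}}$, and isolating $s$ gives the claimed bound.

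For Inequality~\ref{lupper}, I would use only the capacity constraint: the $\ell$ large items alone have total size strictly greater than $\ell\,\TF(i+1)$, yet this total cannot exceed $1$. Hence $\ell\,\TF(i+1) < 1$, i.e. $\ell < \frac{1}{\TF(i+1)} = \sqrt{\frac{2(i+1)}{\guess}}$, which is exactly the stated inequality; intuitively, a knapsack of capacity $1$ holds fewer than $\frac{1}{\TF(i+1)}$ items each of size exceeding $\TF(i+1)$.

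Neither step presents a genuine obstacle, as both are short consequences of counting together with the size constraints. The only points requiring care are the simplification of the nested square roots and the tracking of strict versus non-strict inequalities: the strictness in Inequality~\ref{lupper} comes from the $\ell$ items being \emph{strictly} larger than $\TF(i+1)$, and the same strict size bound drives Inequality~\ref{slower}.
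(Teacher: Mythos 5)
Your proposal is correct and matches the paper's own proof essentially step for step: Inequality~\ref{slower} follows from the average-size constraint $r\guess(\ell+s) > \ell\sqrt{\frac{\guess}{2(i+1)}}$ (the paper phrases this as $a > \frac{\ell\cdot\sqrt{\guess/(2(i+1))} + s\cdot 0}{\ell+s}$, which is the same statement), and Inequality~\ref{lupper} follows from the capacity bound $\ell\sqrt{\frac{\guess}{2(i+1)}} < 1$. The algebraic simplifications and the handling of strictness are also as in the paper, so there is nothing to change.
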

\begin{proof}
  Since \OPT's accepted items have average size $a$, we have that
  \[r\guess=a > \frac{\ell\cdot\sqrt{\frac{\guess}{2(i+1)}}+s\cdot 0}{\ell+s},\]
  and, equivalently,
\begin{equation*}
  s > \ell\left(\frac{1}{r\sqrt{2\guess(i+1)}}-1\right)\,.
  \end{equation*}
In addition, since \OPT accepts $\ell$ items larger than
$\sqrt{\frac{\guess}{2(i+1)}}$,
$\ell\sqrt{\frac{\guess}{2(i+1)}}<1$, so
\begin{equation*}
  \ell < \sqrt{\frac{2(i+1)}{\guess}}\,.
  \end{equation*}
\end{proof}

\begin{restatable}{theorem}{mlrgreaterthanone}
  \label{thm:mlrgreaterthanone}
  For all request sequences $\sigma$, such that $r\geq 1$,
  $$\MLB(\sigma)
  \geq \frac{1}{2r}\OPT(\sigma)-1.$$
\end{restatable}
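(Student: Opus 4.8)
The plan is to mirror the structure of the proof of Theorem~\ref{thm:adb}, splitting on the reason \MLB rejects items, and to absorb an unavoidable off-by-one error into the additive $-1$ of the statement. First I would dispose of the case where \MLB ever rejects an item because the level is too high: here Corollary~\ref{levelreject} already gives $\MLB(\sigma) > \frac{r}{2}\OPT(\sigma)-1$, and since $r\geq 1$ implies $\frac{r}{2}\geq\frac{1}{2r}$, this is at least $\frac{1}{2r}\OPT(\sigma)-1$, as required.

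The remaining case is where every rejection is due to size, so \MLB accepts every item of size at most its final threshold $\TF(i_t+1)$, where $i_t$ is the terminal value of $i$. I would first argue that \MLB accepts exactly $i_t$ items larger than $\TF(i_t+1)$ (the maximality defining $i_t$ forbids more, since $\TF$ is decreasing) together with all $s$ of \OPT's items of size at most $\TF(i_t+1)$, whence $\MLB(\sigma)\geq i_t+s$ and so $\MLB(\sigma)+1\geq (i_t+1)+s$, while $\OPT(\sigma)=\ell+s$ with $\ell,s$ as in Lemma~\ref{lemma:slowerlupper}. Comparing totals via optimality of \OPT gives $i_t\leq\ell$. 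I would then record the identity $\sqrt{2\guess(i_t+1)}=\guess/\TF(i_t+1)$, which turns $\frac{1}{r\sqrt{2\guess(i_t+1)}}-1$ into $\frac{\TF(i_t+1)}{a}-1$, matching the lower bound on $s$ from Lemma~\ref{lemma:slowerlupper}, and split on the sign of that quantity exactly as in Theorem~\ref{thm:adb}.

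In Subcase~2a, $\TF(i_t+1)>a$, so the lower bound on $s$ is positive; replacing $s$ by this smaller positive quantity in numerator and denominator of $\frac{(i_t+1)+s}{\ell+s}$ (legitimate once $i_t<\ell$, the case $i_t=\ell$ giving a ratio above $1$ for free) produces exactly the expression bounded in Lemma~\ref{lemma:ratiocalc}. I would invoke that lemma with $q=\frac{1}{r}$: its hypothesis $\ell<\sqrt{2(i_t+1)/\guess}$ is the second inequality of Lemma~\ref{lemma:slowerlupper}, and its quadratic hypothesis, after substituting $u=\sqrt{2\guess(i_t+1)}$, reduces to $ru^2-2ru+2-\frac{1}{r}\geq 0$, which holds for $u\leq\frac{1}{r}$; the subcase assumption gives precisely $u=\guess/\TF(i_t+1)<\guess/a=\frac{1}{r}$. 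This yields $\frac{\MLB(\sigma)+1}{\OPT(\sigma)}>\frac{1}{2r}$. In Subcase~2b, $\TF(i_t+1)\leq a$, so $u\geq\frac{1}{r}$; here I would instead use $s\geq 0$ together with $\ell<1/\TF(i_t+1)$ to get $\frac{(i_t+1)+s}{\ell+s}\geq\frac{i_t+1}{\ell}>(i_t+1)\TF(i_t+1)=\sqrt{(i_t+1)\guess/2}=\frac{u}{2}\geq\frac{1}{2r}$, again handling $i_t=\ell$ trivially. Combining the two cases gives $\MLB(\sigma)\geq\frac{1}{2r}\OPT(\sigma)-1$.

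The main obstacle I anticipate is the bookkeeping around the off-by-one: the ratio one can honestly prove is $\frac{i_t+1}{\ell}$-flavoured rather than $\frac{i_t}{\ell}$, so the clean $\frac{1}{2r}$ only emerges after shifting to $\MLB(\sigma)+1$, and one must separately check that the degenerate case $i_t=\ell$ (where the substitution of Subcase~2a would reverse direction) breaks neither subcase. The other place needing care is confirming that the quadratic hypothesis of Lemma~\ref{lemma:ratiocalc} is equivalent to the subcase condition $\TF(i_t+1)>a$, but this is a short computation once the identity $\sqrt{2\guess(i_t+1)}=\guess/\TF(i_t+1)$ is in hand.
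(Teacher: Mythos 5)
Your proof is correct, and its skeleton coincides with the paper's: level-rejections are dismissed via Corollary~\ref{levelreject}, and in the size-rejection case your split on $\TF(i_t+1)>a$ versus $\TF(i_t+1)\leq a$ is, through your identity $\sqrt{2\guess(i_t+1)}=\guess/\TF(i_t+1)$, exactly the paper's split on $r<\frac{1}{\sqrt{2\guess(i_t+1)}}$ versus $r\geq\frac{1}{\sqrt{2\guess(i_t+1)}}$; the easy side is the same direct computation ($\frac{i_t+1}{\ell}>(i_t+1)\TF(i_t+1)=\sqrt{\guess(i_t+1)/2}\geq\frac{1}{2r}$, where the paper bounds $\OPT(\sigma)\leq\frac{1}{r\guess}$ rather than using $\ell<1/\TF(i_t+1)$), and the hard side invokes the same two lemmas, Lemma~\ref{lemma:slowerlupper} and Lemma~\ref{lemma:ratiocalc} with $q=\frac1r$. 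The genuine difference is how you verify the quadratic hypothesis of Lemma~\ref{lemma:ratiocalc}. The paper studies $f(r)=2r\sqrt{2\guess(i_t+1)}-2r\guess(i_t+1)+\frac1r-2$ by calculus over $r$: it locates the critical point $r^{\ast}$ and checks three candidate maxima, the hardest reducing to $(v-1)^2\geq 0$; this analysis needs $0<2\guess(i_t+1)<1$, which the paper secures by treating $i_t+1\geq\frac{1}{2\guess}$ as a separate subcase. Your substitution $u=\sqrt{2\guess(i_t+1)}$ instead turns the hypothesis into the parabola condition $ru^2-2ru+2-\frac1r\geq 0$, which is decreasing up to its vertex at $u=1$ and vanishes at $u=\frac1r$; since $r\geq 1$ gives $\frac1r\leq 1$, nonnegativity on $(0,\frac1r]$ is immediate, and your subcase assumption is precisely $u<\frac1r$. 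This is more elementary (no differentiation), and it exposes that the paper's extra subcase is vacuous for $r\geq 1$, since $u<\frac1r\leq 1$ already forces $i_t+1<\frac{1}{2\guess}$. What the paper's formulation buys is symmetry: the identical $f(r)$-template is reused in Theorem~\ref{thm:mlrlessthan1} for $r<1$, where the split on $i_t+1$ versus $\frac{1}{2\guess}$ is genuinely needed. Your off-by-one bookkeeping and your handling of the degenerate case $i_t=\ell$ (which also covers $\ell=0$, as $i_t\leq\ell$) are sound.
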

\begin{proof}
  By Corollary~\ref{levelreject}, if \MLB rejects an item in $\sigma$
  due to the knapsack not having room for the item,
  $\MLB(\sigma) \geq \frac{r}{2}\OPT(\sigma)-1\geq
  \frac{1}{2r}\OPT(\sigma)-1$ for $r\geq 1$.

  Now, suppose that \MLB does not reject any item due to it not
  fitting in the knapsack. If \MLB is not optimal, it must reject due to
  the size of the item.

  Let $i_t$ denote the final value of $i$ when the algorithm is run.
  This means that \MLB has accepted $i_t$ items of size greater than
  $\sqrt{\frac{\guess}{2(i_t+1)}}$. We perform a case analysis based on
  whether this value is smaller or larger than $r\guess$.

\paragraph*{Case~1: $r\geq \frac{1}{\sqrt{2\guess(i_t+1)}}$}
In this case, $i_t+1\geq \frac{1}{2r^2\guess}$ and $\OPT(\sigma) \leq
\frac{1}{r\guess} \leq \sqrt{\frac{2(i_t+1)}{\guess}}$.
Thus,
  \[
     \frac{\MLB(\sigma)+1}{\OPT(\sigma)}
  \geq
  \frac{i_t+1}{\sqrt{\frac{2(i_t+1)}{\guess}}}  = \sqrt{\frac{\guess(i_t+1)}{2}}
  \geq \sqrt{\frac{\guess\frac{1}{2r^2\guess}}{2}}=\frac{1}{2r}\,.
\]
  Therefore,  $\MLB(\sigma)
  \geq \frac{1}{2r}\OPT(\sigma)-1$.

\paragraph*{Case~2: $r<\frac{1}{\sqrt{2\guess(i_t+1)}}$}

  Suppose \OPT accepts $\ell$ items larger than
  $\sqrt{\frac{\guess}{2(i_t+1)}}$ and $s$ items of size at most
  $\sqrt{\frac{\guess}{2(i_t+1)}}$.  Note that \MLB also accepts the $s$
  items of size at most $\sqrt{\frac{\guess}{2(i_t+1)}}$, since we are in
  the case where it does not reject items because of the knapsack
  being too full.
 
  Given the input sequence $\sigma$, we consider the ratio
\[
     \frac{\MLB(\sigma)+1}{\OPT(\sigma)}
  \geq
  \frac{(i_t+1)+s}{\ell+s}\,.
\]

The result follows if this ratio is always at least $\frac{1}{2r}$.

\paragraph*{Subcase~2a: $i_t+1 \geq \frac{1}{2\guess}$}

In this case, $\MLB(\sigma) \geq i_t \geq \frac{1}{2\guess}-1$, while
$\OPT(\sigma) \leq \frac{1}{r\guess}$. Thus,
$\MLB(\sigma)\geq \frac{r}{2}\OPT(\sigma)-1$.

\paragraph*{Subcase~2b: $i_t+1 < \frac{1}{2\guess}$}

By Ineq.~\ref{slower} of Lemma~\ref{lemma:slowerlupper}, and since
$\frac{\MLB(\sigma)+1}{\OPT(\sigma)}\leq 1$,
\[
     \frac{\MLB(\sigma)+1}{\OPT(\sigma)}
  \geq
  \frac{(i_t+1)+s}{\ell+s}
  \geq
  \frac{(i_t+1)+\left(
    \frac{1}{r\sqrt{2\guess(i_t+1)}} -1\right)\ell}{\frac{\ell}{r\sqrt{2\guess(i_t+1)}}}\,.
\]

We will show that this is at least $\frac{1}{2r}$.

From our case conditions, $i_t+1<\frac{1}{2\guess}$ and
$1\leq r<\frac{1}{\sqrt{2\guess(i_t+1)}}$, we get that
$\frac{1}{r^2} >2\guess(i_t+1)$ and $0<2\guess(i_t+1)<1$.  Consider the
function $$f(r)=2r\sqrt{2\guess(i_t+1)}-2r\guess(i_t+1)+\frac{1}{r}-2\,.$$
Taking the derivative with respect to $r$ gives
\begin{align*}
  f'(r) & =2\sqrt{2\guess(i_t+1)}-2\guess(i_t+1)-\frac{1}{r^2}.
\end{align*}
Setting this equal to zero and solving for~$r$, we find
\[r^{\ast} = \frac{1}{\sqrt{2\sqrt{2\guess(i_t+1)}-2\guess(i_t+1)}}.\]
The possible maximum value for $f(r)$ in the range for $r$ is
then at $1$, $r^{\ast}$, or $\frac{1}{\sqrt{2\guess(i_t+1)}}$.
For all three values, $f(r)\leq 0$. The hardest (but still
simple) case is for~$r=r^{\ast}$, where
\[f(r^{\ast}) = \frac{2\sqrt{v} - v}{\sqrt{2\sqrt{v}-v}} + \sqrt{2\sqrt{v}-v} - 2,\]
where we let $v$ denote $2\guess(i_t+1)$.
Note that due to the subcase we are in, $0<v<1$.
Now,
\begin{align*}
    & \frac{2\sqrt{v} - v}{\sqrt{2\sqrt{v}-v}} + \sqrt{2\sqrt{v}-v} - 2 \leq 0 \\
      \Updownarrow~~~ \\
    & 4\sqrt{v} - 2v - 2 \leq 0 \\
    \Updownarrow~~~ \\
    & 2\sqrt{v} \leq v + 1 \\
    \Updownarrow~~~ \\
    & 4v \leq v^2 + 2v + 1 \\
    \Updownarrow~~~ \\
    & 0 \leq (v-1)^2
\end{align*}
which clearly holds.

By Ineq.~\ref{lupper} of Lemma~\ref{lemma:slowerlupper}, the result now
follows from Lemma~\ref{lemma:ratiocalc} with $q=\frac1r$.
\end{proof}

\begin{restatable}{theorem}{mlrlessthanone}
    \label{thm:mlrlessthan1}
  For all request sequences $\sigma$, such that $r< 1$,
  $$\MLB(\sigma) \geq \frac{r}{2}\OPT(\sigma)-1.$$
\end{restatable}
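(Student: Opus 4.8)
The plan is to follow the structure of the proof of Theorem~\ref{thm:mlrgreaterthanone} almost verbatim, the crucial difference being that the target bound $\frac{r}{2}\OPT(\sigma)-1$ is now \emph{exactly} the conclusion of Corollary~\ref{levelreject}. Hence the first move disposes of the level-rejection case for free: if \MLB ever rejects an item because the knapsack is too full, Corollary~\ref{levelreject} already gives $\MLB(\sigma) > \frac{r}{2}\OPT(\sigma)-1$. I would therefore assume henceforth that \MLB never rejects because of the level, so any non-optimal behavior comes from a size rejection. If \MLB is optimal there is nothing to prove (as $\frac{r}{2}<1$), so I assume it rejects some item due to its size and let $i_t$ be the final value of $i$; then \MLB has accepted $i_t$ items of size exceeding $\sqrt{\guess/(2(i_t+1))}$ and, since no level rejection occurred and the threshold only tightens over time, it has accepted \emph{every} item of size at most $\sqrt{\guess/(2(i_t+1))}$.

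I would then split on the same quantity as in Subcases~2a/2b of Theorem~\ref{thm:mlrgreaterthanone}, namely whether $i_t+1 \ge \frac{1}{2\guess}$; writing $v := 2\guess(i_t+1)$, this is whether $v \ge 1$. In the easy case $v \ge 1$ the direct estimate suffices: $\MLB(\sigma)\ge i_t \ge \frac{1}{2\guess}-1 = \frac{r}{2}\cdot\frac{1}{r\guess}-1 \ge \frac{r}{2}\OPT(\sigma)-1$, using only $\OPT(\sigma)\le \frac{1}{r\guess}$. Note that, unlike the $r\ge 1$ proof, no separate analogue of its Case~1 is required here: since $r<1$ together with $v<1$ forces $r\sqrt{v}<1$, the remaining case is automatically the ``Case~2'' regime.

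For the hard case $v<1$, let \OPT accept $\ell$ items larger than $\sqrt{\guess/(2(i_t+1))}$ and $s$ items no larger; \MLB accepts all $s$ small ones, so $\frac{\MLB(\sigma)+1}{\OPT(\sigma)}\ge \frac{(i_t+1)+s}{\ell+s}$, and here $\ell\ge 1$ (otherwise \MLB accepts all of \OPT's items and is optimal). Since \MLB is not optimal, $\MLB(\sigma)+1\le\OPT(\sigma)$; combined with $\MLB(\sigma)+1\ge (i_t+1)+s$ and $\OPT(\sigma)=\ell+s$, this yields $(i_t+1)+s\le \ell+s$, i.e. $i_t+1\le \ell$. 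This monotonicity (with $i_t+1\le\ell$ the fraction is non-decreasing in $s$) is exactly what licenses replacing $s$ by the smaller lower bound from Ineq.~\ref{slower} of Lemma~\ref{lemma:slowerlupper} without increasing the fraction. After the substitution the fraction is precisely the right-hand side of Lemma~\ref{lemma:ratiocalc} with $i=i_t$, and I would apply that lemma with $q=r$ (its hypotheses $\ell<\sqrt{2(i_t+1)/\guess}$ and $i_t\ge 0$ coming from Ineq.~\ref{lupper} of Lemma~\ref{lemma:slowerlupper}) to conclude the fraction exceeds $\frac{q}{2}=\frac{r}{2}$, hence $\MLB(\sigma)\ge \frac{r}{2}\OPT(\sigma)-1$.

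The one genuine computation, and the step I expect to be the main obstacle, is verifying the hypothesis of Lemma~\ref{lemma:ratiocalc} for $q=r$, namely $2r\sqrt{v}-rv+r-2\le 0$ on the range $0<r<1$, $0<v<1$. This turns out to be cleaner than its counterpart in Theorem~\ref{thm:mlrgreaterthanone}: the left side is linear in $r$ with slope $2\sqrt{v}-v+1=2-(\sqrt{v}-1)^2$, which is positive for $v\in(0,1)$, so the expression is maximized at $r=1$, where it equals $2\sqrt{v}-v-1=-(\sqrt{v}-1)^2\le 0$. Thus the hypothesis holds throughout the relevant range, and the case analysis closes.
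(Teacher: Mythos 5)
Your proposal is correct and follows essentially the same route as the paper's proof: handle level rejections via Corollary~\ref{levelreject}, split on whether $i_t+1 \geq \frac{1}{2\guess}$, dispatch the easy case by a direct count against $\OPT(\sigma)\leq\frac{1}{r\guess}$, and in the hard case combine Lemma~\ref{lemma:slowerlupper} with Lemma~\ref{lemma:ratiocalc} for $q=r$, verifying its hypothesis by noting that $2r\sqrt{v}-rv+r-2$ is increasing in $r$ and equals $-(\sqrt{v}-1)^2\leq 0$ at $r=1$. The only differences are presentational: you spell out the monotonicity justification ($i_t+1\leq\ell$), the need for $\ell\geq 1$, and the observation that no analogue of the other theorem's Case~1 is needed, all of which the paper leaves implicit in its remark that the proof ``follows that of the previous theorem.''
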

\begin{proof}
  The proof follows that of the previous theorem.

\noindent {\bf Case 1.} $i_t+1 \geq \frac{1}{2\guess}$.
  
  Since $\ell\geq i_t+1$ (otherwise \MLB is optimal), \MLB has
  accepted at least $\frac{1}{2\guess}-1$ items, while \OPT can accept at
  most $\frac{1}{r\guess}$. Thus,
  $\MLB(\sigma)\geq \frac{r}{2}\cdot\frac{1}{r\guess}-1\geq
  \frac{r}{2}\OPT(\sigma)-1$.

\noindent {\bf Case 2.} $i_t+1<\frac{1}{2\guess}$.

By Ineq.~\ref{slower} of Lemma~\ref{lemma:slowerlupper} and since
$\frac{\MLB(\sigma)+1}{\OPT(\sigma)}\leq 1$,
\[
     \frac{\MLB(\sigma)+1}{\OPT(\sigma)}
  \geq
  \frac{(i_t+1)+s}{\ell+s}
  \geq
  \frac{(i_t+1)+\left(
    \frac{1}{r\sqrt{2\guess(i_t+1)}} -1\right)\ell}{\frac{\ell}{r\sqrt{2\guess(i_t+1)}}}\,.
\]

We will show that this is at least $\frac{r}{2}$.  Consider the
function $$f(r)=2r\sqrt{2\guess(i_t+1)}-2r\guess(i_t+1)+r-2\,.$$ Taking the
derivative with respect to $r$ gives
$$f'(r)=2\sqrt{2\guess(i_t+1)}-2\guess(i_t+1)+1\,,$$ which is positive, since by the case condition,
$0<2\guess(i_t+1)<1$. Thus, $f(r)$ is an increasing function for the values
of $\guess$, $i_t+1$, and $r$ considered in this case, so the maximum value
is at the maximum value of $r$, $r=1$, giving that
\[
  f(r) \leq 2r\sqrt{2\guess(i_t+1)}-2r\guess(i_t+1)+r-2<0.
\]
By Ineq.~\ref{lupper} of Lemma~\ref{lemma:slowerlupper}, the result now
follows from Lemma~\ref{lemma:ratiocalc} with $q=r$.
\end{proof}

\subsubsection{Negative Result}

In Section~\ref{sec:advice}, we showed that, even with accurate
advice, no algorithm can be better than $\frac{e-1}{e}$-competitive.
In this section, we give a trade-off in the competitive ratio attained
for different values of~$r$.

\begin{restatable}{theorem}{mllower}
  \label{mllower}
  Let $0< z\leq 2$.  No algorithm can have a competitive ratio better
  than $\frac{1}{zr}$ for every $r$ between $\frac2z$ and
  $\frac{1}{\sqrt{z\guess}}$.  Moreover, any algorithm which is
  $\frac{1}{zr}$-competitive for all $r$ in this interval has a
  competitive ratio of at most $\frac{zr}{4}$, for any positive
  $r < \frac{2}{z}$.
\end{restatable}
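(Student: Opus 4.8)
The plan is to mirror the adversary-based argument of Theorem~\ref{thm:generaluppersmallr}, but with an item-size schedule matched to the $\frac{1}{\sqrt{i}}$-threshold of \MLB rather than the $\frac{1}{i}$-threshold of \ADB. I would build a single adversarial sequence, parameterized by \ALG's additive constant $b$ and by the target small ratio $r_2<\frac{2}{z}$, that proceeds in rounds $j=1,2,\ldots$ In round $j$ it offers items of size $s_j=\sqrt{\frac{\guess}{zj}}$ one at a time, advancing to the next round as soon as \ALG accepts one (the continue-on-accept device, so \ALG banks at most one item per round). After each round it applies a too-few test: if \ALG has accepted fewer than $j-b$ items so far, the adversary hands over the whole batch of $\frac{1}{s_j}$ items of size $s_j$ and stops. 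The point of this schedule is two numerical coincidences: a round-$j$ stop realizes ratio $r=\frac{s_j}{\guess}=\frac{1}{\sqrt{zj\guess}}$, which sweeps exactly the interval $[\frac{2}{z},\frac{1}{\sqrt{z\guess}}]$ as $j$ runs from $1$ up to $\frac{z}{4\guess}$; and at such a stop the competitiveness demand $\frac{1}{zr}\OPT$ evaluates to exactly $j$. Hence the too-few test is precisely the threshold a $\frac{1}{zr}$-competitive algorithm must clear, and a stop there contradicts being $\frac{1}{zr}$-competitive (let alone better) at an interior $r$. The while-loop thus runs while the level is not yet too high and while $j\le\frac{z}{4\guess}$ (equivalently $r\ge\frac{2}{z}$), and on normal completion the adversary releases $\frac{1}{r_2\guess}$ items of size $r_2\guess$.

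For the first assertion I run this with the tail placed at the boundary $r=\frac{2}{z}$. If the loop ever halts in the too-few branch, \ALG fails to be $\frac{1}{zr}$-competitive at that round's $r$. Otherwise the loop reaches $j=\frac{z}{4\guess}$, where, by the integral estimate used in Lemma~\ref{lemma:sizeupper}, the level $\sum_j s_j$ of the one-per-round items equals $1-O(\sqrt{\guess})$; offering the boundary batch of $\frac{z}{2\guess}$ items of size $\frac{2\guess}{z}$ then gives $\OPT=\frac{z}{2\guess}$ while \ALG adds only $O(1/\sqrt{\guess})=o(\OPT)$ of them, so $\frac{\ALG}{\OPT}\to\frac{1}{2}=\frac{1}{zr}\big|_{r=2/z}$. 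Either way \ALG is not strictly better than $\frac{1}{zr}$ for every $r$ in the range.

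For the Moreover part I assume \ALG is $\frac{1}{zr}$-competitive throughout the interval, so the too-few branch is impossible and \ALG accepts an item in all but at most $b$ rounds up to some $j_t\le\frac{z}{4\guess}$. Summing the accepted sizes by the same integral bound shows the level reaches $1-O(\sqrt{\guess})$, so the residual capacity admits only $O(1/\sqrt{\guess})$ of the final size-$r_2\guess$ items. Since $\OPT=\frac{1}{r_2\guess}$ for the tail, \ALG accepts at most $j_t+o(\OPT)\le\frac{z}{4\guess}+o(\OPT)$ items, whence $\frac{\ALG}{\OPT}\le\frac{zr_2}{4}+o(1)$, giving competitive ratio at most $\frac{zr_2}{4}$ for $r=r_2<\frac{2}{z}$. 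The second sentence of the theorem is the contrapositive.

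The main obstacle is the bookkeeping that renders the two error sources genuinely lower-order: the at most $b$ rounds \ALG may skip while still clearing the too-few test, and the leftover capacity when the loop halts. Because $\OPT=\Theta(\frac{1}{\guess})$ grows only as $\guess\to0$, I must confirm that a skipped large round (costing $O(\sqrt{b\guess})$ of level) and the residual gap below $1$ each contribute only $o(\OPT)$ tail items, i.e.\ vanish after dividing by $\OPT$. This is exactly where the coincidence $\sum_{j\le z/(4\guess)} s_j\to1$ at the boundary $r=\frac{2}{z}$ is essential, and where the matching constants $\frac{1}{zr}$ and $\frac{zr}{4}$ (equal to $\frac{1}{2}$ at $r=\frac{2}{z}$) arise; reconciling the strict ``better than'' phrasing with the additive-constant definition of the competitive ratio is a minor further point.
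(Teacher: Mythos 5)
Your construction follows the paper's adversary (Algorithm~\ref{trade-off-adversarial}) closely---the same size schedule $s_j=\sqrt{\guess/(zj)}$, the same round count $\lfloor z/(4\guess)\rfloor$, the same too-few test with slack $b$, the same tail of $\frac{1}{r_2\guess}$ items, and the same integral estimates---but the mechanics of your stop are different in a way that breaks the key step. In your design a round offers items only until \ALG accepts one, and the knapsack-filling batch of $\lfloor 1/s_j\rfloor$ items of size $s_j$ is handed over \emph{after} the too-few test fails, as the stop action. But \ALG is still playing when that batch arrives: having accepted fewer than $j-b$ items (at most one per round), its level is at most $\sum_{i\le j}s_i\approx 2\sqrt{\guess j/z}$, which is well below $1$ for $j$ well below $z/(4\guess)$, so it can pack roughly $(1-\mathrm{level})/s_j$ of the fresh batch items. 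In the extreme, the strategy ``reject everything until a batch appears, then accept greedily'' forces a stop at round $b+1$ with level $0$ and then packs the entire batch, ending \emph{optimal} on your sequence. So a too-few stop does not contradict $\frac{1}{zr}$-competitiveness---it can even be a windfall for \ALG. This invalidates both places you use it: the first part's claim that ``a stop there contradicts being $\frac{1}{zr}$-competitive,'' and the Moreover part's claim that ``the too-few branch is impossible,'' on which your entire level-accumulation argument rests.

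The fix is exactly what the paper does: give the \emph{entire} batch of $\lfloor\sqrt{zk/\guess}\rfloor$ items of size $\sqrt{\guess/(zk)}$ in every round, before the test, and at a too-few termination end the sequence with nothing further. Then \ALG has already seen (and mostly declined) a knapsack-filling batch, \OPT takes that batch, the demand $\frac{1}{zr}\OPT\approx k$ stands, and \ALG is stuck below $k-b$. Note that once whole batches are given, \ALG may accept several items per round, so your ``at most $j_t\le z/(4\guess)$ in-loop items'' bound must be replaced by the paper's capacity argument: the $i$th accepted item has size at least $\sqrt{\guess/(z(i+b))}$, so after $p-b$ accepted items only $O(\sqrt{\guess})$ capacity remains, and every remaining item of the sequence has size at least $r_2\guess$. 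Your numerical bookkeeping is otherwise correct and matches the paper (the realized $r=\frac{1}{\sqrt{zj\guess}}$ sweeping $[\frac{2}{z},\frac{1}{\sqrt{z\guess}}]$, the demand evaluating to $j$, the level $1-O(\sqrt{\guess})$, the ratio $\frac{zr_2}{4}+o(1)$). One genuinely nice deviation worth keeping: for the first part you place the tail at the boundary $r=\frac{2}{z}$ and let the normal-completion case pin the ratio at $\frac12=\frac{1}{zr}\big|_{r=2/z}$, whereas the paper instead inflates every item size by $\eps$ so that overflow forces a too-few termination; once the stop mechanics are repaired, your variant is a valid and arguably cleaner way to handle that half.
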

\begin{proof}
  We consider the adversary that gives the input sequence $\sigma_z$
  defined by Algorithm~\ref{trade-off-adversarial}.
    \begin{algorithm}
    \label{tradeoffalg}
    \begin{algorithmic}[1]
    \LeftComment{Assume $\frac{1}{q\guess}\in\NAT$}
    \State $p$ \ASSIGN $\lfloor \frac{z}{4\guess}\rfloor$
    \State $k$ \ASSIGN $0$ \linespread{1.2}\selectfont
    \While{$k\leq p-1$}
       \State $k$++
       \State Give $\left\lfloor
       \sqrt{\frac{zk}{\guess}}\right\rfloor$ items of size 
       $\sqrt{\frac{\guess}{zk}}$\label{itemsize}
       \If{\ALG has accepted fewer than $k-b$ items} \textbf{terminate} \label{terminate}
       \EndIf
    \EndWhile
    \State Give $\frac{1}{q\guess}$ items of size $q\guess$ \label{smallitems}
    \end{algorithmic}
    \caption{Adversarial sequence establishing trade-off on robustness
      versus consistency. The adversarial algorithm takes
      parameters, $z$, $q$, and $b$, such that $0 < z \leq 2$,
      $0 < q < \frac{1}{\sqrt{z\guess}}$, and $b\geq 0$.}
\label{trade-off-adversarial}
\end{algorithm}
  We begin with the second part of the theorem.
  Consider an online algorithm, \ALG, and assume that there exists a
  constant, $b$, such that
  $\ALG(\sigma) \geq \frac{1}{zr} \OPT(\sigma)-b$, for any sequence
  $\sigma$ and any $r$ such that $\frac2z \leq r \leq
  \frac{1}{\sqrt{z\guess}}$.
  Now,
  consider the adversary that gives the input sequence $\sigma_z$
  defined by Algorithm~\ref{trade-off-adversarial}.

If the adversarial algorithm terminates in Line~\ref{terminate}, then,
\ALG has accepted at most $k-b-1$ items.  In this case,
$a=\sqrt{\frac{\guess}{zk}}$, and \OPT accepts exactly the
$\left\lfloor \sqrt{\frac{zk}{\guess}}\right\rfloor$ items from the last
iteration of the while-loop. Since $a=r\guess$, $r=\sqrt {\frac{1}{zk\guess}}$, which lies between
$\sqrt{\frac{1}{zp\guess}} \geq \sqrt{\frac{1}{z\guess} \cdot \frac{4\guess}{z}} =
\frac{2}{z}$ and $\frac{1}{\sqrt{z\guess}}$.  Thus,
\begin{align*}
\ALG(\sigma_z)
& \; \leq \; k-b-1 
  \; \leq \; \frac{k-1}{\left\lfloor\sqrt{\frac{zk}{\guess}}\right\rfloor}\OPT(\sigma_z)-b 
  \; < \; \frac{k-1}{\sqrt{\frac{zk}{\guess}}-1}\OPT(\sigma_z)-b \\
& \; < \; \frac{k}{\sqrt{\frac{zk}{\guess}}}\OPT(\sigma_z)-b 
  \; = \; \sqrt{\frac{k\guess}{z}}\OPT(\sigma_z)-b 
  \; = \; \frac{1}{zr}\OPT(\sigma_z)-b \,,
\end{align*}
where the second strict inequality holds because $1$  is
added to the numerator and denominator of a positive fraction less than $1$.
This contradicts the assumption that for each $r$
  between $\frac2z$ and $\frac{1}{\sqrt{z\guess}}$, 
  $\ALG(\sigma) \geq \frac{1}{zr} \OPT(\sigma)-b$, for any sequence
  $\sigma$, when the adversarial algorithm terminates in Line~\ref{terminate}.
  Thus, the adversarial algorithm does not terminate there.
  
  If the adversarial algorithm does not terminate in
  Line~\ref{terminate}, $r=q$ and
  $\OPT(\sigma_z) = \frac{1}{q\guess} =
  \frac{1}{r\guess}$.  Moreover, for \ALG, the $i$th accepted
  item must have size at least $\sqrt{\frac{\guess}{z(i+b)}}$, for
  $1 \leq i \leq p-b$. Thus, these first $p-b$ items fill the knapsack to
  at least
$$\sum_{i=b+1}^{p}
\sqrt{\frac{\guess}{zi}}\geq \sqrt{\frac{\guess}{z}}\int_{b+1}^{p+1}
\frac{1}{\sqrt{i}}di = \sqrt{\frac{\guess}{z}} (2\sqrt{p+1}-2\sqrt{b+1}),$$
where we use that $\frac{1}{\sqrt{i}}$ is a decreasing function.

Since the items of size $r\guess$ are the smallest items of the sequence,
  this means that
\begin{align*}
  \ALG(\sigma_z)
  & \leq p+\frac{1-\sqrt{\frac{\guess}{z}} (2\sqrt{p+1}-2\sqrt{b+1})}{r\guess} \\
  & \leq \frac{z}{4\guess}+\frac{1-\sqrt{\frac{\guess}{z}} \left(2\sqrt{\frac{z}{4\guess}}-2\sqrt{b+1}\right)}{r\guess} \\
  & = \frac{z}{4\guess}+\frac{1- 1 + 2\sqrt{\frac{\guess(b+1)}{z}}}{r\guess} \\
  & =\frac{1}{r\guess}\left(\frac{zr}{4}+2\sqrt{\frac{\guess(b+1)}{z}}\right) \\
  & = \left(\frac{zr}{4}+2\sqrt{\frac{\guess(b+1)}{z}}\right)\OPT(\sigma_z)\,.
\end{align*}
As a function of $\guess$, the lower bound is
$\frac{zr}{4}+2\sqrt{\frac{\guess(b+1)}{z}}$, but the second term becomes
insignificant as $\guess$ approaches zero.

To show that the algorithm cannot
have a competitive ratio better than $\frac{1}{zr}$, for every $r$ between
$\frac{2}{z}$ and $\frac{1}{\sqrt{z\guess}}$, we consider
Algorithm~\ref{trade-off-adversarial} with the item sizes on
Line~\ref{itemsize} equal to $\sqrt{\frac{\guess}{zk}}+\eps$. Following the
proof above,
in the case where the adversarial algorithm terminates in Line~\ref{terminate},
$\ALG$'s competitive ratio is at most $\frac{1}{zr}$, for $r$ in this
range. However, for small enough $\guess$, the adversarial algorithm must at
some point terminate in
Line~\ref{terminate}, since otherwise the knapsack would be over-filled:
Similar to the calculations above, 
the first $p-b$ items fill the knapsack to at least
\begin{align*}
\sum_{i=b+1}^{p} \sqrt{\frac{\guess}{zi}}+\eps
& \geq \sqrt{\frac{\guess}{z}}\int_{b+1}^{p+1} \frac{1}{\sqrt{i}}di +(p-b)\eps \\
& = \sqrt{\frac{\guess}{z}} \left(2\sqrt{p+1}-2\sqrt{b+1}\right) +(p-b)\eps \\
& \geq  \sqrt{\frac{\guess}{z}} \left(2\sqrt{\frac{z}{4\guess}-1+1}-2\sqrt{b+1}\right) +\left(\frac{z}{4\guess}-1-b\right))\eps \\
& = 1-2\sqrt{b+1} +\left(\frac{z}{4\guess}-1-b\right)\eps\,.
\end{align*}
Since $\guess$ can be arbitrarily small, the last term can dominate
the second term, giving a result larger than $1$. The result of
terminating in Line~\ref{terminate} is the same as above,
$\ALG(\sigma_z)< \frac{1}{zr}\OPT(\sigma_z)-b$, giving a
contradiction.
\end{proof}

Setting $z=2$ in Theorem~\ref{mllower} demonstrates a Pareto-like
trade-off between consistency and robustness for \MLB:

\begin{corollary}
  No algorithm can have a competitive ratio better than $\frac{1}{2r}$
  for every $r$ between $1$ and $\frac{1}{\sqrt{2\guess}}$.
  Moreover, any algorithm which is $\frac{1}{2r}$-competitive for all
  $r$ between $1$ and $\frac{1}{\sqrt{2 \guess}}$ has a competitive
  ratio of at most $\frac{r}{2}$ for any positive $r<1$.
\end{corollary}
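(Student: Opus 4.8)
The plan is to obtain this corollary as the direct specialization of Theorem~\ref{mllower} to the boundary value $z=2$. First I would confirm that $z=2$ is admissible, i.e., that it satisfies the hypothesis $0 < z \leq 2$ of the theorem; it does, lying exactly at the upper endpoint of the permitted range, so no strict inequality in the hypotheses is violated.

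Next I would substitute $z=2$ into each quantity appearing in the theorem and simplify. The interval of $r$-values, originally $\tfrac{2}{z} \leq r \leq \tfrac{1}{\sqrt{z\guess}}$, becomes $1 \leq r \leq \tfrac{1}{\sqrt{2\guess}}$, which is exactly the range named in the corollary. The impossibility bound $\tfrac{1}{zr}$ becomes $\tfrac{1}{2r}$, matching the first sentence. For the second sentence, the trade-off bound $\tfrac{zr}{4}$ becomes $\tfrac{2r}{4}=\tfrac{r}{2}$, and the condition $r < \tfrac{2}{z}$ becomes $r < 1$. Thus both claims of the corollary are precisely the $z=2$ instances of the two parts of Theorem~\ref{mllower}, and nothing further is required.

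Since this is a pure instantiation, there is no genuine obstacle: the general theorem was deliberately stated with a free parameter $z$ so that this particular Pareto-like trade-off could simply be read off. The only points requiring care are checking that the boundary value $z=2$ is allowed and performing the arithmetic simplifications correctly, both of which are routine. If useful, I would also record the interpretation, namely that $z=2$ yields exactly the consistency value $\tfrac{1}{2}$ at $r=1$ together with the robustness profile $\tfrac{1}{2r}$, and that this is precisely the guarantee achieved by \MLB in Theorems~\ref{thm:mlrgreaterthanone} and~\ref{thm:mlrlessthan1}, so that \MLB is optimal along this trade-off.
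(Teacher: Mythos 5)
Your proposal is correct and is exactly the paper's own argument: the paper introduces the corollary with the single sentence ``Setting $z=2$ in Theorem~\ref{mllower} demonstrates a Pareto-like trade-off,'' i.e., the corollary is the pure instantiation $z=2$, just as you carried out. Your substitutions (interval $\bigl[\tfrac{2}{z},\tfrac{1}{\sqrt{z\guess}}\bigr] \to \bigl[1,\tfrac{1}{\sqrt{2\guess}}\bigr]$, bounds $\tfrac{1}{zr}\to\tfrac{1}{2r}$ and $\tfrac{zr}{4}\to\tfrac{r}{2}$) and the check that $z=2$ lies in the admissible range are all accurate.
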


\section{Advice Complexity}
\label{numberofbits}
In this section, we briefly consider the Online Unit Cost Knapsack Problem
in terms of advice complexity, concentrating on upper bounds,
following the techniques in~\cite{BKKR14} and many other articles on
advice complexity including~\cite{ADKRR18,CFL15}.
One assumes that a certain number of bits are
available to approximate actual values (that might not be small integers).

The advice given in the algorithms \ADBML and \MLB is
a prediction for the value, $a$, representing the average size of an
item that \OPT accepts, and it could have some error. One could use
\ADBML in the advice complexity setting, assuming that an oracle gives
two values: $z$, the number of zeros after the binary point in the
binary representation of $a$, followed by $s$, the next $k$ bits of
$a$. In this case, the prediction $\guess$ given for $a$ is
$\frac{s}{2^{z+k}}$.  (The numerator should be thought of as the
value, e.g., if $s$ is the bits $1101$, the value is~$13$.)  Since the
high order bit of $s$ is $1$, this value is at least
$\frac{1}{2^{z+1}}$.  The error in the prediction, $\guess$, is only
due to the missing low order bits (assumed, possibly incorrectly, to
be zero).  The missing bits represent a number less
than~$\frac{1}{2^{z+k}}$.  Thus, the ratio, $r$, in $a=r\guess$ is in
the range~$1\leq r\leq 1+\frac{1}{2^{k-1}}$.

By Theorem~\ref{thm:adbp},
we can use the algorithm \ADBML (with the modification that it calculates
$\guess=\frac{s}{2^{z+k}}$ after reading and decoding the advice) and obtain
that for all $\sigma$,
$$\ADBML(\sigma) \geq \frac{e-1-\frac{1}{2^{k-1}}}{e}\OPT(\sigma)\,.$$
Note that the length of the advice is independent of the length of the
request sequence, though dependent on the values in that sequence.
The value, $z$, and the bitstring, $s$, must be specified using
self-delimiting encoding, since we do not know how many bits are used
for them. For example, $\lceil \log (z+1)\rceil$ could be written in
unary ($\lceil \log (z+1)\rceil$ ones, followed by a zero) before
writing $z$ itself in binary. Treating $s$ similarly, at most
$2(k+\lceil \log (z+1)\rceil +1)$ bits are used.

Since \OPT can be viewed as accepting a prefix of the sequence of
items sorted in non-decreasing order of size, there is another obvious
type of advice to give.  Let the advice be $k$-bit approximations to
both the size of the largest item that \OPT accepts, $s$, and the
fraction, $t$, of the knapsack not filled with items of size strictly
smaller than~$s$. The approximation to $s$ can be given using the
technique above, specifying the number of leading zeros first and then
$k$ significant bits, For $t$, we do not use the count of leading
zeros and simply use the $k$ most significant bits.
There are two reasons that it is necessary to give
the fraction of the knapsack not filled with items of at most this
size. One reason is that,
even if the exact value of $s$ was given, it is unknown if \OPT accepts one
or many items of that size, and these ``large'' items could come before any
smaller ones. The other reason is that, since the size of this largest
accepted item is rounded
down, there may be many items that \OPT accepts that are larger than
this (though never any item as large as~$s+\frac{1}{2^{z+k}}$). Thus,
it can be necessary to accept many items larger than $s$, and we need
to know how much space we can use for this, or if space should be saved for
many items much smaller than~$s$.
The algorithm will accept all items that are smaller than $s$, which is the optimal behavior on those items (so in
the worst case for the performance ratio, no such items arrive).
Thus, we are only interested in items of size between $\frac{s}{2^{z+k}}$ and
$\frac{s+1}{2^{z+k}}$ and can calculate a bound on the competitive ratio just from the algorithm's
and \OPT's performance on items in that range. Since the algorithm does not accept all items in the
worst case, we may assume that there are enough items in this size range that it 
rejects some. Under this assumption,
the algorithm accepts at least
$\FLOOR{\frac{\frac{t}{2^k}}{\frac{s+1}{2^{z+k}}}}$ and \OPT accepts at most
$\FLOOR{\frac{\frac{t+1}{2^k}}{\frac{s}{2^{z+k}}}}$. For an asymptotic result, ignoring the rounding down on the algorithm's performance, this gives a performance
ratio of at least $$\frac{t\cdot s}{(s+1)(t+1)}\geq \frac{2^k\cdot 2^{z+k}}
{(2^k+1)(2^{z+k}+1)}\geq \frac{2^{2k}}{(2^k +1)^2}.$$
Since we approximate two values, we need twice as much advice as for the
first approach, that is
$4(k+\lceil \log (z+1)\rceil +1)$ bits of advice.
The competitive ratio with this approach is better than that of the first
approach, but it also uses more advice.

With respect to optimality, we note that the lower bound of $\log n$
from~\cite{BKKR14} for the general Knapsack Problem cannot be used
directly here, since the items used in their sequences all have
size~$1$, so the weights are very important.  In contrast to the upper
bounds proven above, we prove that for optimality, the number of
advice bits needed is a function of $n$, at least~$\log (n/3)$.
Consider the set of input sequences defined to have length $n$ as
follows: Let $n=3k$ and $0\leq \ell\leq k$. Then $I_\ell$ consists of
(in the order listed)
\begin{itemize} 
\item  $k$ items of size $\frac{1}{k}$,
\item $2(k-\ell)$ items of size $\frac{1}{2k}$,
  \item $2\ell$ items of size $1$.
\end{itemize}
Suppose for the sake of contradiction that \ALG is optimal on all of
these sequences and never reads $\log(n/3)$ bits of advice.  \OPT
accepts $\ell$ items of size $\frac{1}{k}$ and then $2(k-\ell)$ items
of size $\frac{1}{2k}$, completely filling up the knapsack with
$2k-\ell$ items.  Intuitively, the advice needs to say how many of the
first $k$ items to accept. Since there are $n/3$ sequences in all and
fewer than $\log(n/3)$ bits of advice, there are at least two of the
sequences $I_j$ and $I_{j'}$ for which \ALG receives the same
advice. Thus, \ALG accepts the same number, say $j^*$, items of size
$\frac{1}{k}$ on both $I_j$ and $I_{j'}$. Without loss of generality,
assume that $j^*\not= j'$. If $j^* > j'$, then \ALG can accept only
$2(k-j^*)$ items of size $\frac{1}{2k}$. In all,
$\ALG(I_{j'}) \leq j^*+2(k-j^*) <2k-j' = \OPT(I_{j'})$. If $j^* < j'$,
then $\ALG(I_{j'}) \leq j^* +2(k-j') < 2k-j' = \OPT(I_{j'})$. Thus,
\ALG is not optimal on $I_{j'}$, giving a contradiction.



\bibliography{refs}

\end{document}